\newcommand{\rd}{\mathrm{d}}
\newcommand{\beq}{\begin{equation}}
\newcommand{\eeq}{\end{equation}}
\newcommand{\bea}{\begin{eqnarray}}
\newcommand{\eea}{\end{eqnarray}}
\newcommand{\nn}{\nonumber}
\newcommand\noi{\noindent}
\newcommand{\bs}{\boldsymbol}
\newcommand{\bk}{\begin{cases}}
\newcommand{\ek}{\end{cases}}
\newcommand{\tbf}{\textbf}
\newtheorem{theorem}{Theorem}[section]
\newtheorem{lemma}[theorem]{Lemma}
\theoremstyle{definition}
\newtheorem{definition}[theorem]{Definition}
\newtheorem{example}[theorem]{Example}
\theoremstyle{remark}
\newtheorem{remark}[theorem]{Remark}
\newtheorem{proposition}[theorem]{Proposition}
\numberwithin{equation}{section}
\begin{document}

\title[Polarization of generalized Nijenhuis torsions]{Polarization of generalized Nijenhuis torsions}

\author{Piergiulio Tempesta}
\address{Departamento de F\'{\i}sica Te\'{o}rica, Facultad de Ciencias F\'{\i}sicas, Universidad
Complutense de Madrid, 28040 -- Madrid, Spain \\  and Instituto de Ciencias Matem\'aticas, C/ Nicol\'as Cabrera, No 13--15, 28049 Madrid, Spain}
\email{piergiulio.tempesta@icmat.es, ptempest@ucm.es}
\thanks{The research of P. T. has been supported by the research project
PGC2018-094898-B-I00, Ministerio de Ciencia, Innovaci\'{o}n y Universidades and Agencia Estatal de Investigaci\'on,
Spain, and by the Severo Ochoa Programme for Centres of Excellence in R\&D
(CEX2019-000904-S), Ministerio de Ciencia, Innovaci\'{o}n y Universidades y Agencia Estatal de Investigaci\'on, Spain. 
P. T. is member of the Gruppo Nazionale di Fisica Matematica (GNFM)}
\author{Giorgio Tondo}
\address{Dipartimento di Matematica e Geoscienze, Universit\`a  degli Studi di Trieste,
piaz.le Europa 1, I--34127 Trieste, Italy.}
\email{tondo@units.it}
\thanks{G. Tondo thanks prof. V. Beorchia for useful discussions. The research of G. T. has been supported by the research project FRA2021-2022, Universit\'a degli Studi di Trieste, Italy. 
}
\dedicatory{In memory of Professor Alexandre Vinogradov}

\subjclass[2010]{MSC: 53A45, 58C40, 58A30.}
\date{February 5, 2022}

\begin{abstract}


In this work, we introduce the notion of polarization  of  generalized  Nijenhuis torsions and establish several algebraic identities. We prove that these polarizations are relevant in the characterization of Haantjes $C^{\infty}$(M)-modules of operator fields.

\end{abstract}

\maketitle

\tableofcontents

\section{Introduction}

In the last years, the theory of Haantjes operators, namely (1,1)-tensor fields having a vanishing Haantjes torsion \cite{Haa1955},  has experienced a resurgence of interest.  New applications have been found, for instance in the theory of infinite-dimensional integrable systems \cite{MGall13}-\cite{MGall17} and of hydrodynamic-type systems \cite{FeKhu},\cite{FM2007MA}. Recently, Haantjes operators have also been related with Hamiltonian integrable and superintegrable models and with the classical problem of separation of variables in Hamiltonian mechanics  \cite{TT2016}, \cite{TT2021AMPA}, \cite{RTT2022CNS}.

In \cite{TT2021JGP} the notion of Haantjes algebra has been introduced. Its relevance is due to the fact that, in the semisimple, Abelian case, one can construct a coordinate chart in which all of the operators of the algebra can be diagonalized simultaneously.

Our theory of generalized Nijenhuis torsions of level $m$ has been introduced in \cite{TT2021JGP} and further studied in \cite{TT2022CMP}. In particular, it was proved that the vanishing of a generalized Nijenhuis torsion of a given operator field on a manifold is sufficient to guarantee that each eigendistribution as well as each direct sum of them is Frobenius integrable (mutual integrability). This result generalizes the standard integrability theorems of Nijenhuis \cite{Nij1955} and Haantjes \cite{Haa1955} to a much wider class of operators. In addition, we proved that coordinate charts exist allowing us to represent the operator in a block-diagonal form.


The aim of this work is to introduce the notion of \textit{polarization of a generalized Nijenhuis torsion of level $m$}. In the spirit of the theory of polarization of homogeneous forms, our construction allows us to define a new vector-valued 2-form, depending on a set of operators $\{\bs{A}_1,\ldots,\bs{A}_{2m}\}$, $m\geq 1$. For $m=1$, it coincides with the standard Fr\"olicher-Nijenhuis bracket. 

A direct algebraic application of the polarization of generalized Nijenhuis torsions concerns  the theory of Haantjes modules, namely $C^{\infty}$-modules of operators having vanishing Haantjes torsion.  Our main result is Theorem \ref{th:main}, stating that the vanishing of the polarization of a generalized Nijenhuis torsion of level $m$, evaluated on suitable arguments, is necessary and sufficient for the existence of an Abelian generalized Haantjes module of the same level.


\section{Generalized Nijenhuis torsions}
\label{sec:1}
In this section, we shall review our recent construction of generalized Nijenhuis torsions \cite{TT2021JGP} and Haantjes brackets \cite{TT2022CMP}. 

Let $M$  be a differentiable manifold. We shall denote by $\mathfrak{X}(M)$ the Lie algebra of smooth vector fields on $M$   and  by $\mathcal{T}^{1}_{1}(M)$ the space of smooth $(1,1)$-tensor fields on $M$ (operator fields). For the sake of simplicity, from now on the expressions ``tensor fields'' and ``operator fields'' will be abbreviated to tensors and operators. In the following, all tensors will be assumed to be smooth. 
\subsection{Generalized Nijenhuis and Haantjes geometries}

\begin{definition} [\cite{Nij1951}]\label{def:N}
Let $\boldsymbol{A}:\mathfrak{X}(M)\rightarrow \mathfrak{X}(M)$ be an operator in  $\mathcal{T}^{1}_{1}(M)$. The
 \textit{Nijenhuis torsion} of $\boldsymbol{A}$ is the vector-valued $2$-form  defined by
\begin{equation} \label{eq:Ntorsion}
\tau_ {\boldsymbol{A}} (X,Y):=\boldsymbol{A}^2[X,Y] +[\boldsymbol{A}X,\boldsymbol{A}Y]-\boldsymbol{A}\Big([X,\boldsymbol{A}Y]+[\boldsymbol{A}X,Y]\Big),
\end{equation}
where $X,Y \in \mathfrak{X}(M)$ and $[ \ , \ ]$ denotes the Lie bracket of  vector fields.
\end{definition}
We recall the definition of generalized torsions,  according to the formulation of \cite{TT2021JGP}. 
\begin{definition} \label{df:mtorsion}
Let $\boldsymbol{A}:\mathfrak{X}(M)\to \mathfrak{X}(M)$ be an operator. For each integer $m\geq 1$, the generalized Nijenhuis torsion of $\boldsymbol{A}$ of level $m$  is the vector--valued $2$--form defined recursively by
\bea \label{GNTn}
\nn \mathcal{\tau}^{(m)}_{\boldsymbol{A}}(X,Y):&=&
\boldsymbol{A}^2\mathcal{\tau}^{(m-1)}_{\boldsymbol{A}}(X,Y)+
\mathcal{\tau}^{(m-1)}_{\boldsymbol{A}}(\boldsymbol{A}X,\boldsymbol{A}Y) \\ &-& 
\boldsymbol{A}\Big(\mathcal{\tau}^{(m-1)}_{\boldsymbol{A}}(X,\boldsymbol{A}Y)+\mathcal{\tau}^{(m-1)}_{\boldsymbol{A}}(\boldsymbol{A}X,Y)\Big), \quad X,Y \in \mathfrak{X}(M) \ .
\eea
Here the notations $\tau_{\boldsymbol{A}}^{(0)}(X,Y):= [X,Y]$,  $\tau_{\boldsymbol{A}}^{(1)}(X,Y):=\tau_{\boldsymbol{A}}(X,Y)$
are used.
\end{definition}
\noi We recall a useful formula, proved in \cite{KS2017} (Section 4.6):
\begin{equation} \label{eq:K17}
\mathcal{\tau}^{(m)}_{{\boldsymbol{A}}}(X,Y)=\sum_{p=0}^m \sum_{q=0}^{m} (-1)^{2m-p-q} \binom{m}{p} \binom{m}{q} \bs{A}^{p+q}\big[ \bs{A}^{m-p}X,\bs{A}^{m-q} Y\big ] \ .
\end{equation}
This formula can also be proved by induction over $m$.

\begin{example} \label{def:H}
 \noi For $m=2$ one finds that $\mathcal{\tau}^{(m)}_{{\boldsymbol{A}}}(X,Y)$ coincides with  the \textit{Haantjes torsion} of $\boldsymbol{A}$, that is,   the vector-valued $2$-form defined by \cite{Haa1955}
\begin{equation} \label{eq:Haan}
\mathcal{H}_{\boldsymbol{A}}(X,Y):=\boldsymbol{A}^2\tau_{\boldsymbol{A}}(X,Y)+\tau_{\boldsymbol{A}}(\boldsymbol{A}X,\boldsymbol{A}Y)-\boldsymbol{A}\Big(\tau_{\boldsymbol{A}}(X,\boldsymbol{A}Y)+\tau_{\boldsymbol{A}}(\boldsymbol{A}X,Y)\Big).
\end{equation}
\end{example}

\begin{definition}[\cite{TT2022CMP}]
A generalized Nijenhuis  operator of level $m$ is a (1,1)-tensor  whose  generalized Nijenhuis torsion of the same level vanishes  identically.
\end{definition}
In particular, for $m=1$ and $m=2$, we get the standard Nijenhuis and Haantjes operators.
\noi The relevance of Haantjes operators in the theory of integrable Hamiltonian classical systems and for the construction of separating variables has been recently discussed in \cite{TT2021AMPA}, \cite{RTT2022CNS},  \cite{T2017}.

Another useful geometric object is the vector-valued $(1,2)$-tensor 
$\bs{\sf{T}}(\bs{A},\bs{B}): \mathfrak{X}^*(M) \times \mathfrak{X}(M)\times \mathfrak{X}(M)\rightarrow \mathfrak{X}(M)$ 
defined, for each pair of operators $\bs{A},\bs{B} \in \mathcal{T}^{1}_{1}(M)$, by  
\begin{equation}\label{eq:T}
\bs{\sf{T}}(\bs{A},\bs{B})(\alpha,X,Y):=(\bs{I}\otimes \bs{AB} -\bs{A}\otimes \bs{B})(\alpha,X,Y) \ .
\end{equation}
We shall denote by $\bs{\sf{T}}^T(\alpha,X,Y):=\bs{\sf{T}}(\alpha,Y,X)$ the transpose   of $\bs{\sf{T}}$ with respect to the last two arguments. We also recall that for each operator $\bs{A}$, $\bs{B}$, and for all $\alpha\in \mathfrak{X}^*(M)$, $X,Y \in \mathfrak{X}(M)$,
$$
(\bs{A}\otimes \bs{B}) (\alpha,X,Y)=\langle \alpha, \bs{A}X\rangle \,\bs{B}Y \ .
$$
Thus, one can prove the following
\begin{proposition} [\cite{TT2022CMP}]\label{corollary:3}
Let $\bs{A}:\mathfrak{X}(M)\to \mathfrak{X}(M)$ be an operator. Then, for all $f, g\in C^{\infty}(M)$ the relations 
\begin{eqnarray} \label{tauind}
\tau_{f\bs{A}}(X,Y)&=&f^2 \tau_{\bs{A}}(X,Y)- f\big(\bs{\sf{T}}(\bs{A},\bs{A})-\bs{\sf{T}}^T(\bs{A},\bs{A})\big)(\rd f,X,Y)
\\  
\label{tauind2}
\mathcal{\tau}^{(m)}_{f \bs{A} + g\bs{I} }(X,Y)&=& f^{2m}\mathcal{\tau}^{(m)}_{\boldsymbol{A}}(X,Y),\quad\qquad \qquad m\in\mathbb{N}\backslash \{0,1\} 
\end{eqnarray}
hold.
\end{proposition}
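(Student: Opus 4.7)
\smallskip
\noindent\textit{Proof plan.} The plan is to handle the two identities separately: the first by a single direct expansion, and the second by reducing it to two independent statements (additive shift invariance and $f^{2m}$-homogeneity) for which the closed formula \eqref{eq:K17} is the natural tool.

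For the first identity I would start from Definition~\ref{def:N} and compute $\tau_{f\bs{A}}(X,Y)$ by expanding each of the three Lie brackets via the Leibniz rule $[hU,kV]=hk[U,V]+h\,\langle\rd k,U\rangle V - k\,\langle\rd h,V\rangle U$, together with the tensoriality $\bs{A}(hZ)=h\bs{A}Z$. The $f^{2}$-terms collect into $f^{2}\tau_{\bs{A}}(X,Y)$, while the remainder is exactly
\begin{equation*}
f\bigl(\langle\rd f,\bs{A}X\rangle\bs{A}Y-\langle\rd f,\bs{A}Y\rangle\bs{A}X-\langle\rd f,X\rangle\bs{A}^{2}Y+\langle\rd f,Y\rangle\bs{A}^{2}X\bigr).
\end{equation*}
Reading off the definition \eqref{eq:T} of $\bs{\sf{T}}(\bs{A},\bs{A})$ and of its transpose, one identifies this residue with $-f\bigl(\bs{\sf{T}}(\bs{A},\bs{A})-\bs{\sf{T}}^{T}(\bs{A},\bs{A})\bigr)(\rd f,X,Y)$, which proves \eqref{tauind}.

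For the second identity my strategy is to split it into two sub-claims valid for $m\ge 2$: (a) shift invariance $\tau^{(m)}_{\bs{A}+g\bs{I}}=\tau^{(m)}_{\bs{A}}$, and (b) homogeneity $\tau^{(m)}_{f\bs{A}}=f^{2m}\tau^{(m)}_{\bs{A}}$. The full statement then follows by applying (a) with $\bs{A}$ replaced by $f\bs{A}$ and chaining with (b). For both sub-claims the natural tool is \eqref{eq:K17}: since $f,g\in C^{\infty}(M)$ commute with $\bs{A}$ as scalar multipliers, one has $(f\bs{A})^{k}=f^{k}\bs{A}^{k}$ and $(\bs{A}+g\bs{I})^{k}=\sum_{j=0}^{k}\binom{k}{j}g^{k-j}\bs{A}^{j}$. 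Substituting into \eqref{eq:K17} and expanding every Lie bracket by Leibniz produces a leading contribution equal to $f^{2m}\tau^{(m)}_{\bs{A}}$ (respectively $\tau^{(m)}_{\bs{A}}$) together with a residue carrying one or more factors of $\rd f$ or $\rd g$.

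The main obstacle is to show that this residue vanishes for $m\ge 2$. I would group the derivative terms according to the bracket from which they originated; each such family is a signed sum of products of binomial coefficients whose vanishing reduces to the identities $\sum_{p=0}^{m}(-1)^{p}\binom{m}{p}=0$ (holding for $m\ge 1$) and $\sum_{p=0}^{m}(-1)^{p}p\binom{m}{p}=0$ (holding for $m\ge 2$); the latter is exactly what accounts for the threshold $m\ge 2$ in the statement, consistent with the fact that $\tau_{\bs{A}+g\bs{I}}\neq \tau_{\bs{A}}$ and $\tau_{f\bs{A}}\neq f^{2}\tau_{\bs{A}}$ already at level $m=1$. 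An alternative, possibly cleaner route would be an induction on $m$ using the recursion in Definition~\ref{df:mtorsion} together with \eqref{tauind} at each step, tracking how the $\bs{\sf{T}}$-type corrections produced by the first identity propagate and telescope under one more application of the recursion; this packages the same combinatorial cancellations more conceptually at the cost of more careful bookkeeping of the $\rd f$- and $\rd g$-dependent contributions.
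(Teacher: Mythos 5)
The paper does not actually prove this proposition: it is imported from \cite{TT2022CMP} with no in-text argument, so there is nothing to compare your proof against line by line; I can only assess it on its own terms, and it holds up. Your treatment of \eqref{tauind} is complete and correct: the Leibniz expansion yields exactly the residue you display, and matching it against Definition \eqref{eq:T} of $\bs{\sf{T}}(\bs{A},\bs{A})$ and its transpose gives the stated correction term. For \eqref{tauind2}, the reduction to shift invariance plus $f^{2m}$-homogeneity is the right move, and the two binomial identities you name are indeed what ultimately annihilate the residues: for the homogeneity claim the $\rd f$-terms coming from \eqref{eq:K17} factor immediately as $\bigl(\sum_{q=0}^{m}(-1)^q\binom{m}{q}(m-q)\bigr)$ times an operator expression, and this sum vanishes precisely for $m\ge 2$, matching the threshold. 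The one place where your plan underestimates the bookkeeping is the shift $\bs{A}\mapsto\bs{A}+g\bs{I}$ with non-constant $g$: there the expansion of $(\bs{A}+g\bs{I})^{k}$ produces double sums of binomial coefficients, and before your identities apply one must regroup via $\binom{m}{q}\binom{m-q}{k}=\binom{m}{k}\binom{m-k}{q}$ and resum a geometric factor $\bigl(\bs{I}-g^{-1}(\bs{A}+g\bs{I})\bigr)^{m-k-1}=(-g^{-1}\bs{A})^{m-k-1}$; only then do the remaining sums collapse to $\sum_{k}(-1)^k\binom{m}{k}(m-k)=0$ for $m\ge 2$ as you predicted. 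That intermediate Vandermonde step should be made explicit, but it is routine, so I regard the proposal as correct. Your alternative inductive route through Definition \ref{df:mtorsion} combined with \eqref{tauind} at each step is also viable and is closer in spirit to how such statements are established in \cite{TT2022CMP} and \cite{KS2017}.
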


\par
According to eq. \eqref{tauind2}, we deduce that $\mathcal{\tau}^{(m)}_{\bs{A}}(X,Y)$ is a homogeneous 2-form of degree $2m$ in $\bs{A}$.

\section{Polarization of generalized Nijenhuis  torsions}
In this section, we shall introduce the polarization form of a generalized Nijenhuis torsion. To this aim, we shall start with the simplest representative of our construction.
\subsection{The Fr\"olicher-Nijenhuis bracket}
\begin{definition}[\cite{FN1956}]
Let $\bs{A}$, $\bs{B}\in \mathcal{T}^{1}_{1}(M)$.
 The \textit{Fr\"olicher--Nijenhuis bracket} of $\boldsymbol{A}$ and $\boldsymbol{B}$ is the vector-valued $2$-form 
given by \footnote{For the sake of clarity, in this article we have renounced the usual unified notation $[ \cdot \ , \cdot ]$ which, depending on the context, should stand for both  the standard Lie bracket of vector fields and  the Fr\"olicher--Nijenhuis bracket
of operators. Instead, we have preferred to maintain the symbol $\llbracket\cdot \ ,\cdot\rrbracket$ for the Fr\"olicher--Nijenhuis bracket  and to introduce  the notation $[ \cdot \ , \cdot ]$  for the Lie bracket of two vector fields as well as the commutator of two operators.}
\bea \label{eq:binNtors}
& & \llbracket\bs{A},\bs{B}\rrbracket(X,Y):= 
 \Big(\bs{AB}+\bs{BA}\Big)[X,Y]+[\bs{A}X,\bs{B}Y]+[\bs{B}X,\bs{A}Y] \\
 \nn 
 &&-\bs{A}\Big([X,\bs{B}Y]+[\bs{B}X,Y]\Big )
-\bs{B}\Big ([X,\bs{A}Y]+[\bs{A}X,Y]\Big), \qquad X,Y \in \mathfrak{X}(M) \ .
\eea
\end{definition}
We observe that the Fr\"olicher--Nijenhuis bracket can be regarded as the polarization of the Nijenhuis torsion.
Indeed, as the Nijenhuis torsion \eqref{eq:Ntorsion} is a \textit{quadratic form} involving the entries of the operator $\bs{A}$, it can be naturally polarized. As a result, one obtains the associated bilinear form that coincides with the Fr\"olicher-Nijenhuis bracket:
\begin{equation} \label{FNB}
 \llbracket\bs{A},\bs{B}\rrbracket(X,Y)=  \tau_ {\boldsymbol{A}+\boldsymbol{B}} (X,Y) - \big( \tau_ {\boldsymbol{A}} (X,Y)  + \tau_ {\boldsymbol{B}} (X,Y)\big) \ .
\end{equation}

\noi We remind that the local expression of the components of the Fr\"olicher-Nijenhuis bracket reads
\begin{equation}\label{eq:FNlocal}
\llbracket\bs{A},\bs{B}\rrbracket^i_{jk}=\sum_{l=1}^n \bigg(\bs{A}^l_{[j}\partial_{|l|}\bs{B}^i_{k]}-\bs{A}^i_l\partial_{[j}\bs{B}^l_{k]} +\bs{B}^l_{[j}\partial_{|l|}\bs{A}^i_{k]}-\bs{B}^i_l\partial_{[j}\bs{A}^l_{k]}
\bigg) \ .
\end{equation}
This bracket has relevant geometric applications \cite{NN1957}, in particular in the theory of almost-complex structures and in the detection of obstructions to integrability \cite{KMS1993}.
The  bracket is symmetric and $\mathbb{R}$-linear (but not $C^\infty(M)$-linear)  in  $\boldsymbol{A}$ and $\boldsymbol{B}$.  
In fact, it satisfies the identity
\begin{equation} \label{eq:FNf}
\llbracket f \bs{A}, \bs{B}\rrbracket(X,Y)
=f  \llbracket\bs{A},\bs{B}\rrbracket(X,Y)
- \big(\bs{\sf{T}}(\bs{B},\bs{A})-\bs{\sf{T}}^T(\bs{B},\bs{A})(\rd f,X,Y)\big) 
\end{equation}

\noi Choosing $\bs{B}=\bs{A}$ in Eq. \eqref{eq:binNtors}, one gets twice the Nijenhuis torsion:
\[
\llbracket \boldsymbol{A,A}\rrbracket (X,Y) = 2~\tau_ {\boldsymbol{A}} (X,Y)  \ .
\]

\subsection{Defect and polarization of torsions}
In order  to generalize equation \eqref{FNB}  to the case of Nijenhuis torsions of level $m$,
 we will follow the approach proposed in \cite{Drap}. Precisely, we first introduce  the notion of \textit{defect of a torsion of level $m$}, and consequently the polarization 2-form associated with it.
\begin{definition}
Let $\bs{A}_1,\ldots,\bs{A}_n \in \mathcal{T}^{1}_{1}(M)$. The defect  of index $k$ of the generalized Nijenhuis torsion of level $m$ is the vector-valued 2--form defined by
\begin{equation} \label{eq:Delta}
\Delta^{(m)}_k (\bs{A}_1,\bs{A}_2,\ldots,\bs{A}_{k})(X,Y):= 
\sum_{\mathcal{I} \subseteq \{1,2,\ldots,k\}} (-1)^{k-|\mathcal{I}| }\ \mathcal{\tau}^{(m)}_{{\sum_{i\in \mathcal{I}} \bs{A}_{i}}}(X,Y) \ .
\end{equation}
Here $k,m\in \mathbb{N}\setminus \{0\}$, $\mathcal{I}$ denotes any non-empty subset extracted from $\{1,\ldots,k \}$ and $|\mathcal{I}|$ is the cardinality of $\mathcal{I}$.
\end{definition}
The  defect of index $k$  is by construction symmetric in its arguments $(\bs{A}_1, \ldots, \bs{A}_k)$.
It easy to prove from Eq. \eqref{eq:Delta} that the defect fulfils the recurrence relations 
\begin{eqnarray} \label{eq:DeltaRec}
&\quad \Delta^{(m)}_{k+1} (\bs{A}_1,\bs{A}_2,\ldots,\bs{A}_{k},\bs{A}_{k+1})(X,Y)= \Delta^{(m)}_{k} (\bs{A}_1+\bs{A}_2,\ldots,\bs{A}_{k},\bs{A}_{k+1})(X,Y) \\
\nn&-\Delta^{(m)}_{k} (\bs{A}_1,\bs{A}_3,\ldots,\bs{A}_{k},\bs{A}_{k+1})(X,Y)-\Delta^{(m)}_{k} (\bs{A}_2,\bs{A}_3,\ldots,\bs{A}_{k},\bs{A}_{k+1})(X,Y) \end{eqnarray}
involving the first argument. Analogous relations hold for each of the remaining arguments.

As a direct consequence of the recurrence relations \eqref{eq:DeltaRec} and their analogous ones, we can state the following
\begin{proposition}\label{pr:Deltanull}
  The defect  of index $k$  is additive in each of the entries $\bs{A}_i$ if and only if it satisfies the relation
 \begin{equation}
  \Delta^{(m)}_{k+1} (\bs{A}_1,\bs{A}_2,\ldots,\bs{A}_{k+1})(X,Y)= \bs{0} \ .
 \end{equation}
  \end{proposition}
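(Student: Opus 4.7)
The plan is to read the recurrence relation \eqref{eq:DeltaRec} in two directions. Since the defining formula \eqref{eq:Delta} expresses $\Delta^{(m)}_k$ as a sum over subsets $\mathcal{I}\subseteq\{1,\ldots,k\}$ of the scalar-weighted quantities $\tau^{(m)}_{\sum_{i\in\mathcal{I}}\bs{A}_i}$, with signs depending only on the cardinality $|\mathcal{I}|$, the defect is manifestly invariant under permutations of its operator arguments. Thus, it suffices to argue additivity in one slot and invoke this symmetry to extend it to all slots.

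For the forward implication, assume $\Delta^{(m)}_k$ is additive in each of its entries. Then, in particular, $\Delta^{(m)}_k(\bs{A}_1+\bs{A}_2,\bs{A}_3,\ldots,\bs{A}_{k+1}) = \Delta^{(m)}_k(\bs{A}_1,\bs{A}_3,\ldots,\bs{A}_{k+1}) + \Delta^{(m)}_k(\bs{A}_2,\bs{A}_3,\ldots,\bs{A}_{k+1})$, so the right-hand side of \eqref{eq:DeltaRec} telescopes to zero and $\Delta^{(m)}_{k+1}$ vanishes identically. For the converse, assume $\Delta^{(m)}_{k+1}(\bs{A}_1,\ldots,\bs{A}_{k+1}) = \bs{0}$ for every choice of operators. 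Then \eqref{eq:DeltaRec} becomes exactly the assertion that $\Delta^{(m)}_k$ is additive in its first argument, and the permutation symmetry of $\Delta^{(m)}_k$ promotes this additivity to every slot.

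The main obstacle is essentially nonexistent: all the combinatorial work has already been packaged into the recurrence \eqref{eq:DeltaRec}, which is a direct consequence of the inclusion--exclusion form \eqref{eq:Delta}. The only care required is to verify that $\Delta^{(m)}_k$ is symmetric under permutations of its $k$ operator arguments, which is immediate because the defining sum runs over \emph{all} subsets $\mathcal{I}$ with weights that depend only on $|\mathcal{I}|$. One could of course give a more direct inclusion--exclusion proof of both directions simultaneously, but routing the argument through \eqref{eq:DeltaRec} makes the logical content transparent and keeps the presentation short.
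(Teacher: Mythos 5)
Your argument is correct and is essentially the paper's own: the proposition is stated there as a direct consequence of the recurrence \eqref{eq:DeltaRec} (and its analogues in the other slots), which is exactly how you read it — vanishing of $\Delta^{(m)}_{k+1}$ for all arguments is, via \eqref{eq:DeltaRec}, the statement of additivity in the first slot, and the permutation symmetry of the defect (which the paper also notes explicitly) extends this to every slot.
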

According to the polarization identity stated in \cite{Land} (pag. 42), by specializing the  defect of index $k$ to the case $k=2m$, we are able to define  the new vector-valued 2-form representing the polarization form associated with our generalized Nijenhuis torsions.
\begin{definition}
 The polarization of the  Nijenhuis torsion of level $m$ is the vector-valued 2--form 
\begin{equation} \label{eq:pol}
\begin{split}
\mathcal{P}^{(m)} (\bs{A}_1,\bs{A}_2,\ldots,\bs{A}_{2m})(X,Y)&:= \Delta_{2m}^{(m)} (\bs{A}_1,\bs{A}_2,\ldots,\bs{A}_{2m})(X,Y) \\
&=
\sum_{\mathcal{I} \subseteq \{1,2,\ldots,2m\}} (-1)^{2m-|\mathcal{I}| }\ \mathcal{\tau}^{(m)}_{{\sum_{i\in \mathcal{I}} \bs{A}_{i}}}(X,Y) 
\end{split}
\end{equation}
where $m\in \mathbb{N}\setminus \{0\}$ and $\bs{A}_1,\ldots,\bs{A}_{2m} \in \mathcal{T}^{1}_{1}(M)$.
\end{definition}
\begin{remark}
We omit the factor $\frac{1}{(2m)!}$ usually adopted in polarization formulae (see e.g. \cite{Land}) in order to recover exactly the Fr\"olicher-Nijenhuis bracket \eqref{eq:binNtors} for $m=1$:
\begin{equation} \label{eq:FN=P1}
\mathcal{P}^{(1)}(\bs{A}_1,\bs{A}_2)(X,Y)= \llbracket\bs{A}_1,\bs{A}_2\rrbracket(X,Y) \ .
\end{equation}

\end{remark}

\noi By way of an example, let us consider the 
case $m=2$. The \textit{polarization of the Haantjes torsion} reads explicitly:
\begin{eqnarray}
&& \nn\mathcal{P}^{(2)} (\bs{A}_1,\bs{A}_2,\bs{A}_3, \bs{A}_4)(X,Y) =
\mathcal{H} (\bs{A}_1+\bs{A}_2+\bs{A}_3+\bs{A}_4)(X,Y) \\ \nn
&-&
\bigg(\mathcal{H} (\bs{A}_1+\bs{A}_2+\bs{A}_3) + \mathcal{H} (\bs{A}_1+\bs{A}_2+\bs{A}_4)+
\mathcal{H} (\bs{A}_1+\bs{A}_3+\bs{A}_4)  +\mathcal{H} (\bs{A}_2+\bs{A}_3+\bs{A}_4)\bigg)(X,Y)\\ \nn
&+& \bigg(\mathcal{H} (\bs{A}_1+\bs{A}_2)+  \mathcal{H} (\bs{A}_1+\bs{A}_3)+ \mathcal{H} (\bs{A}_1+\bs{A}_4)+
 \mathcal{H} (\bs{A}_2+\bs{A}_3) + \mathcal{H} (\bs{A}_2+\bs{A}_4)+ \mathcal{H} (\bs{A}_3+\bs{A}_4)\bigg)(X,Y)\\ \nn
 &-& \Big(\mathcal{H} (\bs{A}_1) + \mathcal{H} (\bs{A}_2) + \mathcal{H} (\bs{A}_3) +\mathcal{H} (\bs{A}_4)\Big)(X,Y) \ ,
\end{eqnarray}
where the Haantjes torsion $\mathcal{H} (\bs{A})(X,Y):= \mathcal{H}_{\bs{A}}(X,Y)$ is defined by Eq. \eqref{eq:Haan}.

\vspace{3mm}

Following the approach of \cite{Proc2007}, we deduce an equivalent, useful definition of polarization: 
\begin{equation}
\mathcal{P}^{(m)}(\bs{A}_1,\bs{A}_2,\ldots,\bs{A}_{2m})(X,Y)=\frac{\partial^{2m}}{ \partial \lambda_1\dots \partial\lambda_{2m}} \mathcal{\tau}^{(m)}_{{\sum_{i=1}^{2m}\lambda_i \bs{A}_{i}}}(X,Y) \ ,
\end{equation}
 the $\lambda_i$ being distinct arbitrary real parameters.
\begin{lemma}
The following relation holds
\begin{equation}
\Delta_1^{(m)}(\bs{A})(X,Y)=\tau_{\bs{A}}^{(m)}(X,Y)=\frac{1}{(2m)!}\ \mathcal{P}^{(m)}(\underbrace{\bs{A}, \ldots,\bs{A}}_{2m-\text{times}})
\end{equation}
\begin{proof}
\begin{equation}
\begin{split}
&\mathcal{P}^{(m)} (\underbrace{\bs{A}, \ldots,\bs{A}}_{2m-\text{times}})(X,Y)=\frac{\partial^{2m}}{ \partial \lambda_1\dots \partial\lambda_{2m}} \mathcal{\tau}^{(m)}_{{\sum_{i=1}^{2m}\lambda_i \bs{A}}}(X,Y) =
\frac{\partial^{2m}}{ \partial \lambda_1\dots \partial\lambda_{2m}} (\sum_{i=1}^{2m}\lambda_i)^{2m} \mathcal{\tau}^{(m)}_{{\bs{A}}}(X,Y) 
\\
& = (2m)! \,  \mathcal{\tau}^{(m)}_{{\bs{A}}}(X,Y)
\end{split}
\end{equation}
thanks to the identity
$$
\frac{\partial^{n}}{ \partial \lambda_1\dots \partial\lambda_{n}} (\sum_{i=1}^{n}\lambda_i)^{n} = n!
$$
that is proved by induction over $n$.
\end{proof}
\end{lemma}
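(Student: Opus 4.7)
The first equality is essentially tautological: unwinding Definition \eqref{eq:Delta} with $k=1$, the only nonempty subset of $\{1\}$ is $\{1\}$ itself, so $\Delta_1^{(m)}(\bs{A})(X,Y) = (-1)^{0}\,\tau^{(m)}_{\bs{A}}(X,Y) = \tau^{(m)}_{\bs{A}}(X,Y)$. So the whole content of the lemma lies in the second equality, relating the diagonal value of the polarization to the torsion itself.

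For that, my plan is to exploit the differential characterization of $\mathcal{P}^{(m)}$ already derived immediately before the lemma, namely
\[
\mathcal{P}^{(m)}(\bs{A}_1,\ldots,\bs{A}_{2m})(X,Y) = \frac{\partial^{2m}}{\partial\lambda_1\cdots\partial\lambda_{2m}}\,\tau^{(m)}_{\sum_i \lambda_i \bs{A}_i}(X,Y),
\]
combined with the scalar homogeneity relation \eqref{tauind2}, which for $g=0$ gives $\tau^{(m)}_{\lambda \bs{A}} = \lambda^{2m}\,\tau^{(m)}_{\bs{A}}$. Setting all $\bs{A}_i=\bs{A}$, the operator inside becomes $\bigl(\sum_i \lambda_i\bigr)\bs{A}$, so the torsion factors as $\bigl(\sum_i \lambda_i\bigr)^{2m}\tau^{(m)}_{\bs{A}}(X,Y)$, and the mixed partial derivative only touches the scalar prefactor.

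The remaining step is the scalar identity
\[
\frac{\partial^{n}}{\partial\lambda_1\cdots\partial\lambda_n}\Bigl(\sum_{i=1}^n \lambda_i\Bigr)^{n} = n!\,,
\]
applied with $n=2m$. The cleanest way to establish it is via the multinomial expansion $(\lambda_1+\cdots+\lambda_n)^n=\sum_{|\alpha|=n}\binom{n}{\alpha}\lambda^{\alpha}$: the mixed partial $\partial_{\lambda_1}\cdots\partial_{\lambda_n}$ annihilates every monomial except $\lambda_1\lambda_2\cdots\lambda_n$, whose multinomial coefficient is $n!$ and whose derivative equals $1$. Alternatively, one can induct on $n$ by writing $(\sum_{i=1}^n \lambda_i)^n = \sum_{k=0}^n \binom{n}{k}\lambda_n^{\,k}(\sum_{i=1}^{n-1}\lambda_i)^{n-k}$, differentiating in $\lambda_n$ to isolate the $k=1$ term, and invoking the inductive hypothesis; this is the route implicit in the excerpt.

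I do not anticipate any genuine obstacle: once the derivative characterization of $\mathcal{P}^{(m)}$ and the homogeneity \eqref{tauind2} are at hand, the proof is two lines of calculus. The only thing to be a little careful about is that \eqref{tauind2} is stated for $m\in\mathbb{N}\setminus\{0,1\}$, so the case $m=1$ must be handled separately; but for $m=1$ the identity reduces to $\tau^{(1)}_{\lambda\bs{A}}=\lambda^{2}\tau^{(1)}_{\bs{A}}$, which follows directly from \eqref{eq:Ntorsion} (or equivalently from \eqref{tauind} at $f=\lambda$ constant, where the $\rd f$ term vanishes), so the unified argument still applies.
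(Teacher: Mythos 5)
Your proposal is correct and follows essentially the same route as the paper: the derivative characterization of $\mathcal{P}^{(m)}$, the homogeneity $\tau^{(m)}_{\lambda\bs{A}}=\lambda^{2m}\tau^{(m)}_{\bs{A}}$, and the scalar identity $\partial^{n}_{\lambda_1\cdots\lambda_n}(\sum_i\lambda_i)^n=n!$. You are in fact slightly more careful than the paper in two minor respects --- spelling out the tautological first equality $\Delta^{(m)}_1(\bs{A})=\tau^{(m)}_{\bs{A}}$ and noting that \eqref{tauind2} excludes $m=1$ so that case needs the separate (trivial) check via \eqref{eq:Ntorsion} or \eqref{tauind} --- but these do not change the argument.
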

Let us prove a result which will be used in the sequel of our analysis.
\begin{theorem} \label{th:3.7}
The polarization of the Nijenhuis torsion of level $m$ satisfies the following recurrence relation:
\begin{equation} \label{eq:polRec}
\begin{split}
\mathcal{P}^{(m)} (\bs{A}_1,\bs{A}_2,\ldots,\bs{A}_{2m})(X,Y)=
\sum_{ \substack{
                  j,k,i_1,\ldots, i_{2m-2} \neq \\
                     i_1<\ldots <i_{2m-2} } }^{2m}
          (\bs{A}_j\bs{A}_k+\bs{A}_k\bs{A}_j) \mathcal{P}^{(m-1)} (\bs{A}_{i_1},\ldots,\bs{A}_{i_{2m-2}})(X,Y)\\
 +
 \sum_{ \substack{
                  j,k,i_1,\ldots, i_{2m-2} \neq \\
                     i_1<\ldots <i_{2m-2} } }^{2m} \bigg ( \mathcal{P}^{(m-1)} (\bs{A}_{i_1},\ldots,\bs{A}_{i_{2m-2}})(\bs{A}_j X,\bs{A}_k Y)+
 \mathcal{P}^{(m-1)} (\bs{A}_{i_1},\ldots,\bs{A}_{i_{2m-2}}) (\bs{A}_kX,\bs{A}_j Y) \bigg )\\ 
  -
  \sum_{ \substack{
                  j,k,i_1,\ldots, i_{2m-2} \neq \\
                     i_1<\ldots <i_{2m-2} } }^{2m}   \bs{A}_j \bigg( \mathcal{P}^{(m-1)} (\bs{A}_{i_1},\ldots,\bs{A}_{i_{2m-2}}) (\bs{A}_k X,Y)+
 \mathcal{P}^{(m-1)} (\bs{A}_{i_1},\ldots,\bs{A}_{i_{2m-2}}) (X,\bs{A}_kY) \bigg ) 
\end{split}
\end{equation}
where $m\in \mathbb{N}\setminus \{0,1 \}$, and  is $\mathbb{R}$-multilinear, that is, it is  $\mathbb{R}$-linear in each of the $\bs{A}_i\in \mathcal{T}^{1}_{1}(M)$ for  $m\in \mathbb{N}\setminus \{0 \}$. Here the symbol $j,k,i_1,\ldots, i_{2m-2} \neq$ means that no index is repeated in the sums. 
\end{theorem}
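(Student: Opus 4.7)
The plan is to use the compact representation of the polarization as a mixed partial derivative,
\[
\mathcal{P}^{(m)}(\bs{A}_1,\ldots,\bs{A}_{2m})(X,Y)=\frac{\partial^{2m}}{\partial\lambda_1\cdots\partial\lambda_{2m}}\,\tau^{(m)}_{\bs{A}(\lambda)}(X,Y),\qquad \bs{A}(\lambda):=\sum_{i=1}^{2m}\lambda_i\bs{A}_i,
\]
and combine it with the recursive definition \eqref{GNTn} of $\tau^{(m)}$. After inserting $\bs{A}=\bs{A}(\lambda)$ into that recurrence, each of the four resulting summands factors as a quadratic piece in the parameters $\lambda_i$ (coming from the $\bs{A}(\lambda)$-factors, placed either as $\bs{A}(\lambda)^{2}$ outside $\tau^{(m-1)}$, as a pair of $\bs{A}(\lambda)$'s inside its two vector arguments, or as one outside and one inside) times the factor $\tau^{(m-1)}_{\bs{A}(\lambda)}(\cdot,\cdot)$, which by the homogeneity identity \eqref{tauind2} is polynomial and homogeneous of degree exactly $2(m-1)$ in $\lambda$. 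The total degree equals $2m$, so the operator $\partial_{\lambda_1}\cdots\partial_{\lambda_{2m}}$ simply extracts the coefficient of the multilinear monomial $\lambda_1\lambda_2\cdots\lambda_{2m}$.

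The next step is the multivariable Leibniz rule. The coefficient of $\lambda_1\cdots\lambda_{2m}$ in each product decomposes as a sum over all partitions of $\{1,\ldots,2m\}$ into a two-element index set $\{j,k\}$ carrying the quadratic piece and its complement $\{i_1<\cdots<i_{2m-2}\}$ carrying the $\tau^{(m-1)}$-piece. By the polarization formula applied at level $m-1$, the coefficient of $\lambda_{i_1}\cdots\lambda_{i_{2m-2}}$ in $\tau^{(m-1)}_{\bs{A}(\lambda)}(U,V)$ equals $\mathcal{P}^{(m-1)}(\bs{A}_{i_1},\ldots,\bs{A}_{i_{2m-2}})(U,V)$ for any fixed vector arguments $U,V$, and the $\mathbb{R}$-bilinearity of $\tau^{(m-1)}(\cdot,\cdot)$ in those arguments lets one substitute $U=\bs{A}_{?}X$, $V=\bs{A}_{?}Y$. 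Matching the four terms of \eqref{GNTn} under this procedure: the $\bs{A}(\lambda)^2$-piece, using $\bs{A}(\lambda)^2=\sum_{p,q}\lambda_p\lambda_q\bs{A}_p\bs{A}_q$, contributes the symmetric operator $\bs{A}_j\bs{A}_k+\bs{A}_k\bs{A}_j$ acting on $\mathcal{P}^{(m-1)}(\ldots)(X,Y)$ (first sum); the piece $\tau^{(m-1)}(\bs{A}(\lambda)X,\bs{A}(\lambda)Y)$ contributes both orderings $(\bs{A}_jX,\bs{A}_kY)$ and $(\bs{A}_kX,\bs{A}_jY)$ inside $\mathcal{P}^{(m-1)}$ (second sum); and the two pieces $-\bs{A}(\lambda)\tau^{(m-1)}(\bs{A}(\lambda)X,Y)$, $-\bs{A}(\lambda)\tau^{(m-1)}(X,\bs{A}(\lambda)Y)$ contribute the third sum with its minus sign, of the form $\bs{A}_j\bigl(\mathcal{P}^{(m-1)}(\ldots)(\bs{A}_kX,Y)+\mathcal{P}^{(m-1)}(\ldots)(X,\bs{A}_kY)\bigr)$. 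Assembling the pieces yields \eqref{eq:polRec}.

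The $\mathbb{R}$-multilinearity of $\mathcal{P}^{(m)}$ follows directly from the polarization formula: for fixed $X,Y$, the map $(\bs{A}_1,\ldots,\bs{A}_{2m})\mapsto\tau^{(m)}_{\bs{A}(\lambda)}(X,Y)$ is polynomial in the products $\lambda_i\bs{A}_i$ and homogeneous of total degree $2m$, while the operator $\partial_{\lambda_1}\cdots\partial_{\lambda_{2m}}$ is $\mathbb{R}$-linear and removes one $\lambda_i$ per $\bs{A}_i$, leaving an $\mathbb{R}$-linear dependence on each $\bs{A}_i$. An alternative inductive argument proceeds from \eqref{eq:polRec}: the base case $m=1$ is the Fr\"olicher--Nijenhuis bracket \eqref{eq:FN=P1}, which is $\mathbb{R}$-bilinear, and the inductive step is immediate since each term on the right-hand side of \eqref{eq:polRec} is $\mathbb{R}$-linear in $\bs{A}_j,\bs{A}_k$ and, by the inductive hypothesis, multilinear in the $\bs{A}_{i_s}$ appearing inside $\mathcal{P}^{(m-1)}$.

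The main technical obstacle is combinatorial rather than conceptual: one must carefully track which of the two indices in $\{j,k\}$ plays the outer role (multiplication by $\bs{A}_j$ from the left) versus the inner role (insertion as $\bs{A}_kX$ or $\bs{A}_kY$) in the last two summands, since there the two positions are asymmetric and both orderings of the pair must be enumerated separately; this is precisely the reason the third sum of \eqref{eq:polRec} displays $\bs{A}_j$ outside and $\bs{A}_k$ inside without a symmetrization. The homogeneity identity \eqref{tauind2} plays an indispensable role throughout, since it is what fixes the exact degree of the factor $\tau^{(m-1)}_{\bs{A}(\lambda)}$ and thereby guarantees that only the coefficient of the multilinear monomial $\lambda_1\cdots\lambda_{2m}$ survives.
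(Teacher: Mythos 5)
Your proposal is correct and follows essentially the same route as the paper's proof: both insert $\bs{A}(\lambda)=\sum_i\lambda_i\bs{A}_i$ into the recursive definition \eqref{GNTn}, represent $\mathcal{P}^{(m)}$ as the mixed partial derivative $\partial^{2m}/\partial\lambda_1\cdots\partial\lambda_{2m}$, and use the (inductively known) multilinearity of $\mathcal{P}^{(m-1)}$ to expand $\tau^{(m-1)}_{\bs{A}(\lambda)}$ and extract the coefficient of $\lambda_1\cdots\lambda_{2m}$ term by term, matching the three sums of \eqref{eq:polRec}. The only cosmetic difference is that you phrase the bookkeeping via the multivariable Leibniz rule and offer a direct (non-inductive) homogeneity argument for the $\mathbb{R}$-multilinearity, whereas the paper carries the $\tfrac{1}{(2m-2)!}$ normalizations explicitly and proves multilinearity by induction from the recurrence.
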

\begin{proof}
Both statements can be proved by induction over $m$.
\par
  For $m=1$, we obtain  the relation $\mathcal{P}^{(1)}(\bs{A}_1,\bs{A}_2)=  \llbracket\bs{A}_1,\bs{A}_2\rrbracket $ which is  $\mathbb{R}$-bilinear (see Definition \ref{eq:binNtors}).
\par
For $m=2$, after a lengthy calculation,  we get 
\begin{eqnarray} \label{eq:Pol2}
\nn&&  \mathcal{P}^{(2)} (\bs{A}_1,\bs{A}_2,\bs{A}_3, \bs{A}_4)(X,Y) =
 \\ \nn
 &=&\sum_{ \substack{
                  j,k,i_1, i_{2} \neq \\
                     i_1<i_{2} } } ^4(\bs{A}_j\bs{A}_k+\bs{A}_k\bs{A}_j) \llbracket \boldsymbol{A}_{i_1},\bs{A}_{i_2}\rrbracket (X,Y) 
\\ \nn                     
 &+&
 \sum_{ \substack{
                  j,k,i_1, i_{2} \neq \\
                     i_1<i_{2} } }^4\bigg ( \llbracket \boldsymbol{A}_{i_1},\bs{A}_{i_2}\rrbracket (\bs{A}_j X,\bs{A}_kY)+
  \llbracket \boldsymbol{A}_{i_1},\bs{A}_{i_2}\rrbracket (\bs{A}_k X,\bs{A}_j Y) \bigg )\\ \nn
&  -&
   \sum_{ \substack{
                  j,k,i_1, i_{2} \neq \\
                     i_1<i_{2} } }^4  \bs{A}_j \bigg( \llbracket \boldsymbol{A}_{i_1},\bs{A}_{i_2}\rrbracket (\bs{A}_{k} X,Y)+
  \llbracket \boldsymbol{A}_{i_1},\bs{A}_{i_2}\rrbracket (X,\bs{A}_k Y) \bigg ) \\
\end{eqnarray}
which is a special case of Eq. \eqref{eq:polRec} in view of Eq. \eqref{eq:FN=P1}. Indeed, Eq. \eqref{eq:Pol2} implies that 
$ \mathcal{P}^{(2)} (\bs{A}_1,\bs{A}_2,\bs{A}_3, \bs{A}_4)$ is  $\mathbb{R}$-linear in $\bs{A}_1,\bs{A}_2,\bs{A}_3,\bs{A}_4$ thanks to the bilinearity of the standard Fr\"olicher-Nijenhuis bracket \eqref{eq:binNtors}.
\par
Let us assume as induction hypotheses that the recursion relation  \eqref{eq:polRec}  holds for  the polarization of the Nijenhuis torsion of level $m-1$ and  that this polarization  is linear in $\bs{A}_1,\ldots,\bs{A}_{2m-2}$. Let us   show that  Eq. \eqref{eq:polRec} holds true for the polarized Nijenhuis torsion of level $m$. 
\begin{equation} \label{eq:pol}
\begin{split}
&\mathcal{P}^{(m)} (\bs{A}_1,\bs{A}_2,\ldots,\bs{A}_{2m})(X,Y)=\frac{\partial^{2m}}{ \partial \lambda_1\dots \partial\lambda_{2m}}
 \mathcal{\tau}^{(m)}_{{\sum_{i=1}^{2m}\lambda_i \bs{A}_{i}}}(X,Y) \\
&=\frac{\partial^{2m}}{ \partial \lambda_1\dots \partial\lambda_{2m}}
\bigg( \big(\sum_{i=1}^{2m} \lambda_i \bs{A}_i)^2  \mathcal{\tau}^{(m-1)}_{{\sum_{i=1}^{2m}\lambda_i \bs{A}_{i}}}(X,Y)  +
\mathcal{\tau}^{(m-1)}_{{\sum_{i=1}^{2m}\lambda_i \bs{A}_{i}}}
\Big(\sum_{i=1}^{2m} \lambda_i\bs{A}_i X, \sum_{i=1}^{2m} \lambda_i\bs{A}_i Y \Big)
\\
 &-\Big(\sum_{i=1}^{2m} \lambda_i\bs{A}_i\Big )  \Big(\mathcal{\tau}^{(m-1)}_{{\sum_{i=1}^{2m}\lambda_i \bs{A}_{i}}}
\Big(\sum_{i=1}^{2m} \lambda_i\bs{A}_i X, Y \Big) +
\mathcal{\tau}^{2m}_{{\sum_{i=1}^{2m}\lambda_i \bs{A}_{i}}}
\Big( X, \sum_{i=1}^{2m} \lambda_i\bs{A}_i Y \Big)\bigg)
\end{split}
\end{equation}
If we consider the first term and the induction hypothesis about  the linearity of $\mathcal{P}^{(m-1)}$ we get
\begin{equation}
\begin{split}
&\frac{\partial^{2m}}{ \partial \lambda_1\dots \partial\lambda_{2m}}
\bigg( \big(\sum_{i=1}^{2m} \lambda_i \bs{A}_i)^2  \mathcal{\tau}^{(m-1)}_{{\sum_{i=1}^{2m}\lambda_i \bs{A}_{i}}}(X,Y)\bigg)\\
&=\frac{1}{(2m-2)!} \frac{\partial^{2m}}{ \partial \lambda_1\dots \partial\lambda_{2m}}
 \bigg(\big(\sum_{i=1}^{2m} \lambda_i \bs{A}_i)^2  \mathcal{P}^{(m-1)}(\underbrace{\sum_{i=1}^{2m}\lambda_i \bs{A}_{i},\ldots,\sum_{i=1}^{2m}\lambda_i \bs{A}_{i}}_{(2m-2)-\text{times}})(X,Y)\bigg)\\ 
&=\frac{1}{(2m-2)!} \frac{\partial^{2m}}{ \partial \lambda_1\dots \partial\lambda_{2m}}
 \bigg(\big(\sum_{i=1}^{2m} \lambda_i \bs{A}_i)^2 \sum_{i_1,\ldots,i_{2m-2}}^{2m} \lambda_{i_1}\ldots\lambda_{i_{2m-2}}\mathcal{P}^{(m-1)}(\bs{A}_{i_{1}}, \ldots,\bs{A}_{i_{2m-2}})(X,Y)\bigg) \\
 \end{split}
\end{equation}
\begin{equation*}
\begin{split}
 &=\frac{1}{(2m-2)!} \frac{\partial^{2m}}{ \partial \lambda_1\dots \partial\lambda_{2m}}
 \bigg(\sum_{i_1,\ldots,i_{2m}}^{2m} \lambda_{i_1}\ldots\lambda_{i_{2m}} \bs{A}_{i_{2m-1}} \bs{A}_{i_{2m}} \mathcal{P}^{(m-1)}(\bs{A}_{i_{1}}, \ldots,\bs{A}_{i_{2m-2}})(X,Y)\bigg)
 \\ 
 &=
  \frac{\partial^{2m}}{ \partial \lambda_1\dots \partial\lambda_{2m}}
 \bigg(\lambda_{1} \ldots \lambda_{{2m}} \sum_{ \substack{
                  j,k,i_1,\ldots, i_{2m-2} \neq \\
                     i_1<\ldots <i_{2m-2} } }^{2m} (\bs{A}_{j}\bs{A}_{k}+\bs{A}_{k}  \bs{A}_{j})
\mathcal{P}^{(m-1)}(\bs{A}_{i_{1}}, \ldots,\bs{A}_{i_{2m-2}})(X,Y)\bigg) \ ,
\end{split}
\end{equation*}
therefore we obtain the first summation in the right term of Eq. \eqref{eq:polRec}.
\par
Analogously, considering the second term in Eq. \eqref{eq:pol} we get 
\begin{equation}
\begin{split}
&\frac{\partial^{2m}}{ \partial \lambda_1\dots \partial\lambda_{2m}}
\bigg(\mathcal{\tau}^{(m-1)}_{{\sum_{k=1}^{2m}\lambda_k \bs{A}_{k}}}
\Big(\sum_{i=1}^{2m} \lambda_i\bs{A}_i X, \sum_{j=1}^{2m} \lambda_j\bs{A}_j Y \Big)
\\
&=\frac{\partial^{2m}}{ \partial \lambda_1\dots \partial\lambda_{2m}}
\bigg( \sum_{i,j=1}^{2m} \lambda_i \lambda_j \, \mathcal{\tau}^{(m-1)}_{{\sum_{i=1}^{2m}\lambda_i \bs{A}_{i}}}
\Big(\bs{A}_i X, \bs{A}_j Y \Big)\bigg)
\\
&=\frac{1}{(2m-2)!} \frac{\partial^{2m}}{ \partial \lambda_1\dots \partial\lambda_{2m}}
 \bigg(\sum_{i,j=1}^{2m} \lambda_i \lambda_j \, \mathcal{P}^{(m-1)}(\underbrace{\sum_{i=1}^{2m}\lambda_k \bs{A}_{k},\ldots,\sum_{i=1}^{2m}\lambda_k \bs{A}_{k}}_{(2m-2)-\text{times}})(\bs{A}_i X,\bs{A}_jY)\bigg)
 \end{split}
\end{equation}
\begin{equation*}
\begin{split}
  &=\frac{1}{(2m-2)!} \frac{\partial^{2m}}{ \partial \lambda_1\dots \partial\lambda_{2m}}
 \bigg(\sum_{i_1,\ldots,i_{2m}}^{2m} \lambda_{i_1}\ldots\lambda_{i_{2m}} \mathcal{P}^{(m-1)}(\bs{A}_{i_{1}}, \ldots,\bs{A}_{i_{2m-2}})( \bs{A}_{i_{2m-1}} X,\bs{A}_{i_{2m}}Y)\bigg)
 \\ 
 &=
  \frac{\partial^{2m}}{ \partial \lambda_1\dots \partial\lambda_{2m}}
 \bigg(\lambda_{1} \ldots \lambda_{{2m}} \sum_{ \substack{
                  j,k,i_1,\ldots, i_{2m-2} \neq \\
                     i_1<\ldots <i_{2m-2} } }^{2m} 
          \Big(   \mathcal{P}^{(m-1)}(\bs{A}_{i_{1}}, \ldots,\bs{A}_{i_{2m-2}}) (\bs{A}_j X,\bs{A}_k Y) \\
&+ \mathcal{P}^{(m-1)}(\bs{A}_{i_{1}}, \ldots,\bs{A}_{i_{2m-2}}) (\bs{A}_k X,\bs{A}_j Y) \Big)
                                         \bigg) \ .
\end{split}
\end{equation*}
Thus, we have obtained the second summation in the right term of Eq. \eqref{eq:polRec}.
\par
In the same manner, from the third term of Eq. \eqref{eq:pol} we get 
\begin{equation}
\begin{split}
&\frac{\partial^{2m}}{ \partial \lambda_1\dots \partial\lambda_{2m}}
\bigg( \Big(\sum_{j=1}^{2m} \lambda_j\bs{A}_j\Big )  \Big(\mathcal{\tau}^{(m-1)}_{{\sum_{i=1}^{2m}\lambda_i \bs{A}_{i}}}
\Big(\sum_{k=1}^{2m} \lambda_k\bs{A}_k X, Y \Big) +
\mathcal{\tau}^{2m}_{{\sum_{i=1}^{2m}\lambda_i \bs{A}_{i}}}
\Big( X, \sum_{k=1}^{2m} \lambda_k\bs{A}_k Y \Big)\bigg)
\\
&=\frac{\partial^{2m}}{ \partial \lambda_1\dots \partial\lambda_{2m}}
\bigg( \Big(\sum_{j=1}^{2m} \lambda_j\bs{A}_j\Big )  
\Big( \sum_{k=1}^{2m} \lambda_k \Big( \mathcal{\tau}^{(m-1)}_{\sum_{i=1}^{2m} \lambda_i\bs{A}_{i}}
\Big(\bs{A}_k X, Y \Big) + \mathcal{\tau}^{(m-1)}_{\sum_{i=1}^{2m} \lambda_i\bs{A}_{i}}
\Big( X, \bs{A}_k Y \Big)\Big)\bigg)
\end{split}
\end{equation}
\begin{equation*}
\begin{split}
&=\frac{1}{(2m-2)!} \frac{\partial^{2m}}{ \partial \lambda_1\dots \partial\lambda_{2m}}
 \bigg(\sum_{j,k=1}^{2m} \lambda_j \lambda_k \bs{A}_j \Big(  \mathcal{P}^{(m-1)}(\underbrace{\sum_{i=1}^{2m}\lambda_i \bs{A}_{i},\ldots,\sum_{i=1}^{2m}\lambda_i \bs{A}_{i}}_{(2m-2)-\text{times}})(\bs{A}_k X,Y)\\ 
 &+\mathcal{P}^{(m-1)}(\underbrace{\sum_{i=1}^{2m}\lambda_i \bs{A}_{i},\ldots,\sum_{i=1}^{2m}\lambda_i \bs{A}_{i}}_{(2m-2)-\text{times}})(X,\bs{A}_k Y) \Big )\bigg)
 \end{split}
\end{equation*}
 \begin{equation*}
\begin{split}
  &=\frac{1}{(2m-2)!} \frac{\partial^{2m}}{ \partial \lambda_1\dots \partial\lambda_{2m}}
 \bigg(\sum_{j,k,i_1,\ldots,i_{2m-2}}^{2m} \lambda_j \lambda_k \lambda_{i_1}\ldots\lambda_{i_{2m-2}} \bs{A}_j\Big(\mathcal{P}^{(m-1)}(\bs{A}_{i_{1}}, \ldots,\bs{A}_{i_{2m-2}})( \bs{A}_k X,Y)
 \\ 
 &+ 
  \mathcal{P}^{(m-1)}(\bs{A}_{i_{1}}, \ldots,\bs{A}_{i_{2m-2}})( X,\bs{A}_k Y)\Big)\bigg)
 \end{split}
\end{equation*}
 \begin{equation*}
\begin{split}
 &=
  \frac{\partial^{2m}}{ \partial \lambda_1\dots \partial\lambda_{2m}}
 \bigg(\lambda_{1} \ldots \lambda_{{2m}} \sum_{ \substack{
                  j,k,i_1,\ldots, i_{2m-2} \neq \\
                     i_1<\ldots <i_{2m-2} } }^{2m} 
         \bs{A}_j \Big(  \, \mathcal{P}^{(m-1)}(\bs{A}_{i_{1}}, \ldots,\bs{A}_{i_{2m-2}}) (\bs{A}_k X, Y) \\
&+ \mathcal{P}^{(m-1)}(\bs{A}_{i_{1}}, \ldots,\bs{A}_{i_{2m-2}}) (X,\bs{A}_k Y) \Big)
                                         \bigg) \ .
\end{split}
\end{equation*}
In this way, we have deduced the third summation in the right term of Eq. \eqref{eq:polRec}. Concerning $\mathbb{R}$-multilinearity, it is easily deduced by assuming that the property holds for each polarization of level $m-1$ in Eq. \eqref{eq:polRec} (all of them involving $2m-2$ arguments), and taking also into account that in each summand the explicit dependence on the remaining arguments $\bs{A}_j$ and $\bs{A}_k$ ($j\neq k$) is linear. 
\end{proof}
\begin{proposition} \label{pr:linear}
 If the operators $\bs{A}_i\in \mathcal{T}^{1}_{1}(M)$ pairwise commute, then for $m\geq 2$ the polarization of the Nijenhuis torsion is $C^{\infty}(M)$-multilinear.
\end{proposition}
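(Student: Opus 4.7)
By Theorem~\ref{th:3.7} the polarization $\mathcal{P}^{(m)}$ is already $\mathbb{R}$-multilinear, and it is manifestly symmetric in its $2m$ operator slots, so it suffices to prove $C^{\infty}(M)$-linearity in one slot, namely
\[
\mathcal{P}^{(m)}(f\bs{A}_1,\bs{A}_2,\ldots,\bs{A}_{2m})(X,Y)=f\,\mathcal{P}^{(m)}(\bs{A}_1,\bs{A}_2,\ldots,\bs{A}_{2m})(X,Y)
\]
for every $f\in C^\infty(M)$, under the hypothesis that the operators pairwise commute. The strategy is to exploit commutativity to obtain a closed-form expansion of $\mathcal{P}^{(m)}$ and then track the Leibniz defects that arise when one substitutes $f\bs{A}_1$ for $\bs{A}_1$.

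First I would combine the differential characterisation of $\mathcal{P}^{(m)}$ as $\partial^{2m}/(\partial\lambda_1\cdots\partial\lambda_{2m})$ applied to $\tau^{(m)}_{\sum_i\lambda_i\bs{A}_i}$ with the closed formula \eqref{eq:K17}. Pairwise commutativity permits the multinomial expansion $\bigl(\sum_i\lambda_i\bs{A}_i\bigr)^n=\sum_{|\alpha|=n}\binom{n}{\alpha}\lambda^{\alpha}\bs{A}^{\alpha}$, and extracting the coefficient of $\lambda_1\cdots\lambda_{2m}$ delivers the explicit formula
\[
\mathcal{P}^{(m)}(\bs{A}_1,\ldots,\bs{A}_{2m})(X,Y)=\sum_{p,q=0}^{m}C_{p,q}^{(m)}\!\!\sum_{\substack{A\sqcup B\sqcup C=\{1,\ldots,2m\}\\ |A|=p+q,\,|B|=m-p,\,|C|=m-q}}\!\!\bs{A}^A[\bs{A}^B X,\bs{A}^C Y],
\]
where $C_{p,q}^{(m)}=(-1)^{p+q}\binom{m}{p}\binom{m}{q}(p+q)!(m-p)!(m-q)!$ and $\bs{A}^S:=\prod_{i\in S}\bs{A}_i$ is well defined by commutativity. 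Next I substitute $\bs{A}_1\mapsto f\bs{A}_1$ and treat each summand according to which of $A,B,C$ contains the index $1$: when $1\in A$ the scalar $f$ factors out directly from $\bs{A}^A$, while if $1\in B$ (respectively $1\in C$) the Lie-bracket derivation rule $[fU,V]=f[U,V]-(Vf)U$ (respectively $[U,fV]=f[U,V]+(Uf)V$) produces both an $f$-proportional piece and an anomalous $df$-term of the form $-\langle df,\bs{A}^CY\rangle\,\bs{A}^{A\cup B}X$ (respectively $+\langle df,\bs{A}^BX\rangle\,\bs{A}^{A\cup C}Y$), commutativity having telescoped $\bs{A}^A\bs{A}^B=\bs{A}^{A\cup B}$. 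The totality of $f$-pieces reassembles into $f\,\mathcal{P}^{(m)}(\bs{A}_1,\ldots,\bs{A}_{2m})(X,Y)$.

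The main obstacle is verifying that the anomalous $df$-contributions cancel. Grouping them by the subset $S\ni 1$ and its complement $T$, and counting the number of splittings of $S$, the coefficient multiplying each remaining operator expression collapses to
\[
D_q:=\sum_{p=0}^{m-1}C_{p,q}^{(m)}\binom{m+q-1}{m-p-1}.
\]
A short simplification using $\binom{m}{p}(m-p)!=m!/p!$ and $\binom{m+q-1}{m-p-1}=(m+q-1)!/\bigl((m-p-1)!(p+q)!\bigr)$ reveals that, up to a nonzero overall factor, $D_q$ is proportional to $\sum_{p=0}^{m-1}(-1)^p\binom{m-1}{p}=(1-1)^{m-1}$, which vanishes precisely for $m\geq 2$. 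The second anomalous sum is disposed of by an identical argument after exploiting the symmetry $C_{p,q}^{(m)}=C_{q,p}^{(m)}$ to re-index. The hypothesis $m\geq 2$ is sharp: for $m=1$ the binomial sum equals $1$ rather than $0$, in agreement with the failure of the Fr\"olicher-Nijenhuis bracket $\mathcal{P}^{(1)}=\llbracket\cdot,\cdot\rrbracket$ to be $C^\infty(M)$-linear even on commuting operators, cf.\ Eq.~\eqref{eq:FNf}.
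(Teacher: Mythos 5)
Your proof is correct, but it follows a genuinely different route from the paper's. The paper argues by induction on $m$: the base case $m=2$ is an explicit computation using the scaling identity \eqref{eq:FNf} for the Fr\"olicher--Nijenhuis bracket, in which every obstruction to $C^\infty(M)$-linearity appears as a commutator $[\bs{A}_i,\bs{\sf{T}}(\bs{A}_j,\bs{A}_k)-\bs{\sf{T}}^T(\bs{A}_j,\bs{A}_k)]$ that vanishes under pairwise commutativity, and the inductive step is carried by the recursion relation \eqref{eq:polRec} of Theorem \ref{th:3.7}, since in each summand the extra arguments $\bs{A}_j,\bs{A}_k$ enter either through operator composition or through the (tensorial) slots $X,Y$ of $\mathcal{P}^{(m-1)}$. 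You instead bypass the recursion entirely: combining the differential characterisation of $\mathcal{P}^{(m)}$ with the closed formula \eqref{eq:K17} and the multinomial expansion (legitimate only because the $\bs{A}_i$ commute), you obtain an explicit expansion of $\mathcal{P}^{(m)}$ over ordered partitions $A\sqcup B\sqcup C$, and you show that the Leibniz anomalies produced by $\bs{A}_1\mapsto f\bs{A}_1$ cancel because the relevant coefficient $D_q$ is proportional to $\sum_{p=0}^{m-1}(-1)^p\binom{m-1}{p}=(1-1)^{m-1}$. I checked the combinatorics: the splitting count $\binom{m+q-1}{m-p-1}$ and the simplification of $C^{(m)}_{p,q}$ are right, and the symmetry $C^{(m)}_{p,q}=C^{(m)}_{q,p}$ does dispose of the second family of anomalies. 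Your approach buys an explicit closed formula for $\mathcal{P}^{(m)}$ on commuting operators and isolates precisely why the hypothesis $m\geq 2$ is sharp (for $m=1$ the sum equals $1$, matching the failure of $C^\infty(M)$-linearity of the Fr\"olicher--Nijenhuis bracket); the paper's inductive argument is shorter once Theorem \ref{th:3.7} is in place and avoids any combinatorial identities, but it hides the cancellation mechanism inside the $m=2$ computation.
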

\begin{proof}
Once again, this  statement can be proved by induction over $m$.
\par

 In view of Eq. \eqref{eq:FNf} we obtain
\begin{eqnarray}
&& \nn\mathcal{P}^{(2)} (f \bs{A}_1,\bs{A}_2,\bs{A}_3, \bs{A}_4)(X,Y) =
f \mathcal{P}^{(2)} (f \bs{A}_1,\bs{A}_2,\bs{A}_3, \bs{A}_4)(X,Y) +
\\
 \nn
&-&
\bigg( \bs{A}_2 [\bs{A}_3, \bs{T}(\bs{A}_4,\bs{A}_1)-\bs{T}^T(\bs{A}_4,\bs{A}_1)  ]+\bs{A}_3    [\bs{A}_2, \bs{T}(\bs{A}_4,\bs{A}_1)-\bs{T}^T(\bs{A}_4,\bs{A}_1)  ]\bigg)  (\rd f,X,Y)\\
&-& \nn
\bigg( \bs{A}_2 [\bs{A}_4, \bs{T}(\bs{A}_3,\bs{A}_1)-\bs{T}^T(\bs{A}_3,\bs{A}_1)  ]+\bs{A}_4    [\bs{A}_2, \bs{T}(\bs{A}_3,\bs{A}_1)-\bs{T}^T(\bs{A}_3,\bs{A}_1)  ]\bigg)  (\rd f, X,Y)\\ 
&-& \nn
\bigg( \bs{A}_3 [\bs{A}_4, \bs{T}(\bs{A}_2,\bs{A}_1)-\bs{T}^T(\bs{A}_2,\bs{A}_1)  ]+\bs{A}_4    [\bs{A}_3, \bs{T}(\bs{A}_2,\bs{A}_1)-\bs{T}^T(\bs{A}_2,\bs{A}_1)  ] \bigg) (\rd f,X,Y)
\end{eqnarray}
\begin{eqnarray*}
&+&\nn
\bigg([\bs{A}_3, \bs{T}(\bs{A}_4,\bs{A}_1)-\bs{T}^T(\bs{A}_4,\bs{A}_1)  ]+   [\bs{A}_4, \bs{T}(\bs{A}_3,\bs{A}_1)-\bs{T}^T(\bs{A}_3,\bs{A}_1)  ] \bigg) (\rd f, \bs{A}_2 X,Y)\\ 
&+&\nn
\bigg([\bs{A}_4, \bs{T}(\bs{A}_2,\bs{A}_1)-\bs{T}^T(\bs{A}_2,\bs{A}_1)  ]+   [\bs{A}_2, \bs{T}(\bs{A}_4,\bs{A}_1)-\bs{T}^T(\bs{A}_4,\bs{A}_1)  ] \bigg) (\rd f, \bs{A}_3 X,Y)\\ 
&+&\nn
\bigg([\bs{A}_2, \bs{T}(\bs{A}_3,\bs{A}_1)-\bs{T}^T(\bs{A}_3,\bs{A}_1)  ]+   [\bs{A}_3, \bs{T}(\bs{A}_2,\bs{A}_1)-\bs{T}^T(\bs{A}_2,\bs{A}_1)  ] \bigg) (\rd f, \bs{A}_4 X,Y)\\ 
\end{eqnarray*}
where $ \bs{T}$ is defined by Eq. \eqref{eq:T}.
As the operators $\bs{A}_i$ by assumption commute pairwise, each of the commutators of the form
$$
[\bs{A}_i, \bs{T}(\bs{A}_j,\bs{A}_k)-\bs{T}^T(\bs{A}_j,\bs{A}_k)  ] \qquad\qquad i,j,k \neq
$$
vanishes. Therefore $\mathcal{P}^{(2)} ( \bs{A}_1,\bs{A}_2,\bs{A}_3, \bs{A}_4)$ is $C^{\infty}(M)$-linear. Again
by induction over $m$ and taking into account Eq. \eqref{eq:polRec}, we deduce the $C^{\infty}(M)$-linearity of the polarization of the Nijenhuis  torsions for any level $m\geq 2$.
\end{proof}
\begin{proposition}
For $k>2m$, we have
\beq
\Delta_k^{(m)} (\bs{A}_1,\bs{A}_2,\ldots,\bs{A}_{k})(X,Y)=0 \ .
\eeq
\end{proposition}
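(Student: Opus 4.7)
The plan is to recognize the vanishing of $\Delta_k^{(m)}$ as an instance of the classical polarization identity for homogeneous polynomials. From formula \eqref{eq:K17} (or, equivalently, from \eqref{tauind2}), the generalized Nijenhuis torsion $\tau^{(m)}_{\bs{A}}(X,Y)$ is polynomial of degree exactly $2m$ in $\bs{A}$. Substituting $\bs{A}=\sum_{i=1}^{k}\lambda_i \bs{A}_i$ with commuting scalar parameters $\lambda_1,\ldots,\lambda_k$ therefore yields a vector--valued 2--form whose dependence on the $\lambda_i$'s is a homogeneous polynomial of total degree $2m$, which we write as
\[
\tau^{(m)}_{\sum_i \lambda_i \bs{A}_i}(X,Y)=\sum_{\substack{a_1+\cdots+a_k=2m \\ a_i\geq 0}}\lambda_1^{a_1}\cdots \lambda_k^{a_k}\,c_{a_1,\ldots,a_k},
\]
where the coefficients $c_{a_1,\ldots,a_k}$ are ordered (noncommutative) expressions in the operators $\bs{A}_i$, acting on $X$ and $Y$.

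By the very definition \eqref{eq:Delta}, the defect $\Delta_k^{(m)}$ is obtained by evaluating this polynomial at the characteristic vectors $\lambda=\mathbf{1}_{\mathcal{I}}$ of subsets $\mathcal{I}\subseteq\{1,\ldots,k\}$ and summing with signs $(-1)^{k-|\mathcal{I}|}$. For a fixed monomial of support $S:=\{i:a_i>0\}$, the factor $\prod_{i}(\mathbf{1}_{\mathcal{I}}(i))^{a_i}$ equals $1$ precisely when $S\subseteq\mathcal{I}$ and vanishes otherwise. Swapping the two sums, the contribution of this monomial to $\Delta_k^{(m)}$ is $c_{a_1,\ldots,a_k}$ times
\[
\sum_{\mathcal{I}\supseteq S}(-1)^{k-|\mathcal{I}|}=(-1)^{k-|S|}\sum_{T\subseteq\{1,\ldots,k\}\setminus S}(-1)^{|T|}=(-1)^{k-|S|}(1-1)^{k-|S|}.
\]
Homogeneity forces $|S|\leq 2m$, while the hypothesis $k>2m$ yields $k-|S|\geq k-2m\geq 1$, so the inner sum vanishes identically and each monomial contributes $0$ to $\Delta_k^{(m)}$.

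No serious obstacle is anticipated: the argument is essentially combinatorial bookkeeping, reducing to the standard principle that the alternating sum of a homogeneous polynomial of degree $d$ over characteristic functions of subsets of a $k$-element set vanishes when $k>d$. The only point worth remarking on is that the noncommutativity of the operators $\bs{A}_i$ does not interfere, since the scalars $\lambda_i$ commute with everything; the polynomial identities used live entirely on the scalar side, so the ordering of the $\bs{A}_i$'s appearing in the coefficients $c_{a_1,\ldots,a_k}$ is irrelevant to the cancellation.
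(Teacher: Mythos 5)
Your proof is correct, but it takes a genuinely different route from the paper's. The paper derives the statement from previously established structural results: Theorem \ref{th:3.7} gives the $\mathbb{R}$-multilinearity of $\mathcal{P}^{(m)}=\Delta^{(m)}_{2m}$, Proposition \ref{pr:Deltanull} converts additivity in each entry into the vanishing of the defect of the next index, so $\Delta^{(m)}_{2m+1}=\bs{0}$, and the recursion \eqref{eq:DeltaRec} then propagates the vanishing to all $k>2m$. You instead give a direct, self-contained finite-difference argument: by \eqref{eq:K17}, $\tau^{(m)}_{\sum_i\lambda_i\bs{A}_i}(X,Y)$ is a homogeneous polynomial of degree $2m$ in the commuting scalars $\lambda_i$, the defect \eqref{eq:Delta} is its alternating sum over indicator vectors of subsets, and each monomial's contribution carries the factor $(1-1)^{k-|S|}$ with $|S|\le 2m<k$. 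This is the classical polarization-identity computation; it is independent of the induction underlying Theorem \ref{th:3.7} and makes transparent why $k=2m$ is the threshold (only full-support monomials survive, which is exactly the polarization). Two small points: the polynomiality in the $\lambda_i$ really comes from \eqref{eq:K17} (or from the recursive definition \eqref{GNTn}) together with the $\mathbb{R}$-bilinearity of the Lie bracket and of composition, not merely from the homogeneity relation \eqref{tauind2}, so your parenthetical ``equivalently'' is too quick; and the sum in \eqref{eq:Delta} runs over non-empty $\mathcal{I}$, which is harmless here since every monomial of positive degree has non-empty support, so only non-empty $\mathcal{I}\supseteq S$ ever contribute.
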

\begin{proof}
It is a simple consequence of  Proposition \ref{pr:Deltanull}, Theorem \ref{th:3.7} and the recursion relations \eqref{eq:DeltaRec}.
\end{proof}
Observe that the defect of index $2$ for any $m$ has the simple form 
\begin{equation}\label{eq:Delta2}
\Delta_2^{(m)}(\bs{A},\bs{B})(X,Y)=\tau_{\bs{A}+\bs{B}}^{(m)}(X,Y)-\tau_{\bs{A}}^{(m)}(X,Y)-\tau_{\bs{B}}^{(m)}(X,Y)\ .
\end{equation}
Thus, we can state the following result.
\begin{lemma}
The relation 
\begin{equation} \label{eq:Delta2P}
\Delta_2^{(m)}(\bs{A},\bs{B})(X,Y)=
 \frac{1}{(2m)!} \bigg(\sum_{k=1}^{2m-1}  \binom{2m}{k} \mathcal{P}^{(m)}(\hspace{-2mm}\underbrace{\bs{A},\ldots,\bs{A}}_{(2m-k)-\text{times}}\hspace{-2mm},\, \underbrace{\bs{B},\ldots,\bs{B}}_{k-\text{times}}) (X,Y) \bigg)
\end{equation}
holds.
\end{lemma}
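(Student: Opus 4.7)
The plan is to reduce the identity to a binomial expansion of a fully polarized form. Starting from the definition \eqref{eq:Delta2}, I would rewrite each of the three generalized torsions using the earlier lemma, which gives $\tau_{\bs{C}}^{(m)}(X,Y) = \frac{1}{(2m)!}\mathcal{P}^{(m)}(\bs{C},\ldots,\bs{C})(X,Y)$ for any operator $\bs{C}$. Applying this to $\bs{C}=\bs{A}+\bs{B}$, $\bs{C}=\bs{A}$, and $\bs{C}=\bs{B}$ yields
\[
\Delta_2^{(m)}(\bs{A},\bs{B})(X,Y) = \frac{1}{(2m)!}\Big(\mathcal{P}^{(m)}(\bs{A}+\bs{B},\ldots,\bs{A}+\bs{B})(X,Y) - \mathcal{P}^{(m)}(\bs{A},\ldots,\bs{A})(X,Y) - \mathcal{P}^{(m)}(\bs{B},\ldots,\bs{B})(X,Y)\Big).
\]

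Next, I would invoke the $\mathbb{R}$-multilinearity of $\mathcal{P}^{(m)}$ established in Theorem \ref{th:3.7} together with its symmetry in the $2m$ operator slots (which is manifest from either the subset-sum definition \eqref{eq:pol} or equivalently from the derivative formula $\mathcal{P}^{(m)}(\bs{A}_1,\ldots,\bs{A}_{2m}) = \frac{\partial^{2m}}{\partial\lambda_1\cdots\partial\lambda_{2m}}\tau^{(m)}_{\sum_i \lambda_i\bs{A}_i}$, whose right-hand side is symmetric under permutation of the $\lambda_i$'s). Expanding by the multinomial/binomial theorem gives
\[
\mathcal{P}^{(m)}(\bs{A}+\bs{B},\ldots,\bs{A}+\bs{B})(X,Y) = \sum_{k=0}^{2m}\binom{2m}{k}\mathcal{P}^{(m)}(\underbrace{\bs{A},\ldots,\bs{A}}_{(2m-k)},\underbrace{\bs{B},\ldots,\bs{B}}_{k})(X,Y),
\]
where symmetry is used to collect the $\binom{2m}{k}$ terms that differ only by a reordering of the $\bs{A}$'s and $\bs{B}$'s.

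Finally, I would observe that the $k=0$ and $k=2m$ terms of the sum equal $\mathcal{P}^{(m)}(\bs{A},\ldots,\bs{A})(X,Y)$ and $\mathcal{P}^{(m)}(\bs{B},\ldots,\bs{B})(X,Y)$ respectively, which cancel exactly the two subtracted polarizations in the expression for $\Delta_2^{(m)}$. Only the middle terms $k=1,\ldots,2m-1$ survive, yielding \eqref{eq:Delta2P}.

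The proof is essentially algebraic, and the only delicate point is being careful with symmetry: the polarization $\mathcal{P}^{(m)}$ must be recognized as a symmetric multilinear form so that the binomial collection is legitimate. This symmetry is not stated as a separate proposition in the excerpt, but follows immediately either from the explicit symmetric expression \eqref{eq:Delta} with $k=2m$ (summation over subsets is manifestly symmetric in $\bs{A}_1,\ldots,\bs{A}_{2m}$) or from the mixed-partial derivative representation, where the symmetry of partial derivatives with respect to the $\lambda_i$'s provides the needed symmetry. Hence no genuine obstacle arises; the statement is a routine consequence of the polarization identity combined with multilinearity.
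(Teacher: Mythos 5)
Your proposal is correct and follows essentially the same route as the paper: both rewrite the three torsions in $\Delta_2^{(m)}$ via the identity $\tau^{(m)}_{\bs{C}}=\frac{1}{(2m)!}\mathcal{P}^{(m)}(\bs{C},\ldots,\bs{C})$, expand $\mathcal{P}^{(m)}(\bs{A}+\bs{B},\ldots,\bs{A}+\bs{B})$ binomially using multilinearity, and cancel the $k=0$ and $k=2m$ terms. Your explicit justification of the symmetry of $\mathcal{P}^{(m)}$ (needed to collect the $\binom{2m}{k}$ reorderings) is a point the paper leaves implicit, but it does not change the argument.
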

\begin{proof}
Using eq. \eqref{eq:Delta2} we have
\begin{eqnarray}
&& \hspace{5mm} \big(\tau_{\bs{A}+\bs{B}}^{(m)}-\tau_{\bs{A}}^{(m)}-\tau_{\bs{B}}^{(m)}\Big)(X,Y) = \\ \nn
&&\frac{1}{(2m)!}\Big(\mathcal{P}^{(m)} (\underbrace{\bs{A+B},\ldots,\bs{A+B}}_{2m-\text{times}})-\mathcal{P}^{(m)} (\underbrace{\bs{A},\ldots,\bs{A}}_{2m-\text{times}})-\mathcal{P}^{(m)} (\underbrace{\bs{A},\ldots,\bs{A}}_{2m-\text{times}})\Big)(X,Y)
\end{eqnarray}
In view of the additivity property of $\mathcal{P}^{(m)}$, by induction on $m$ it can be proved that 
$$
\mathcal{P}^{(m)} (\underbrace{\bs{A+B},\ldots,\bs{A+B}}_{2m-\text{times}})(X,Y)=\sum_{k=0}^{2m} \binom{2m}{k}
 \mathcal{P}^{(m)} (\!\!\! \! \underbrace{\bs{A},\ldots,\bs{A}}_{(2m-k)-\text{times}}\hspace{-3mm},\,
 \underbrace{\bs{B},\ldots,\bs{B}}_{k-\text{times}}) (X,Y) \ ,
$$
which completes the proof.
\end{proof}
\subsection{Generalized Haantjes modules and vector spaces}
In the following analysis, we shall further illustrate the relevance of the polarizations of the generalized Nijenhuis torsion \eqref{eq:pol}. Indeed they play a crucial role in the study of the   $C^{\infty}(M)$-modules of generalized Nijenhuis operators. In \cite{TT2021JGP}, we have introduced the notion of \textit{Haantjes modules}, namely  modules of Haantjes operators. This notion has been further generalized in the recent work \cite{RTT2022arx}, where the generalized Haantjes algebras of level $m$ have been introduced. Here we are interested in the case of generalized modules of operators, according to the following
\begin{definition}\label{def:HM}
A generalized Haantjes module of level $m$ is a pair    $(M, \mathscr{H}^{(m)}_{\mathcal{M}})$ which satisfies the following  conditions:
\begin{itemize}
\item
$M$ is a differentiable manifold of dimension $n$;
\item
$\mathscr{H}^{(m)}_{\mathcal{M}}$ is a set of   operators $\boldsymbol{K}:\mathfrak{X}(M)\rightarrow \mathfrak{X}(M)$   such that
\begin{equation}\label{eq:Hmod}
\tau^{(m)}_{f\boldsymbol{K_1} +
                             g\boldsymbol{K}_2}(X,Y)= \boldsymbol{0}
 \ , \qquad\forall X, Y \in \mathfrak{X}(M) \ , \quad \forall f,g \in C^\infty(M)\  ,\quad \forall \boldsymbol{K}_1,\boldsymbol{K}_2 \in  \mathscr{H}^{(m)}_{\mathcal{M}}.
\end{equation}
\end{itemize}
\end{definition}
Thus, a generalized Haantjes module is a free module of generalized Nijenhuis operators of level $m$ over the ring of smooth functions on $M$. In particular, if property \eqref{eq:Hmod} is satisfied  when $f,g$ are  real constants, we shall use the denomination of \textit{generalized Haantjes vector space of level $m$}.

\vspace{3mm}

Our main goal is to determine  the tensorial compatibility conditions ensuring the existence of the generalized Haantjes module of level $m$ generated by \textit{two} operators $\bs{A}$, $\bs{B}:\mathfrak{X}(M)\to \mathfrak{X}(M)$. We construct these conditions in the most general case of not necessarily semisimple operators, under the unique hypothesis  that they commute.


We shall start our analysis from
the following identity, valid for all  $X,Y$$\in$ $\mathfrak{X}(M)$:
\begin{equation} \label{eq:HaanA+B}
\tau^{(m)}_{\lambda\bs{A}+\mu\bs{B}}(X,Y)=\lambda^{2m}\tau_{\boldsymbol{A}}^{(m)}(X,Y)+\mu^{2m}\tau_{\boldsymbol{B}}^{(m)}(X,Y)+
\Delta_2^{(m)}(\lambda\bs{A},\mu\bs{B})(X,Y) \ .
\end{equation}
For $m\geq 1$, this identity holds  $\forall\lambda,\mu$ $\in$ $\mathbb{R}$ as a consequence of Eqs. \eqref{eq:Delta2} and \eqref{tauind}.  For $m\geq 2$, it holds more generally $\forall \lambda,\mu$ $\in C^{\infty}(M)$ due to Eqs. \eqref{eq:Delta2} and \eqref{tauind2}.
\begin{theorem} \label{th:main}
Let $M$ be a differentiable manifold and let $\boldsymbol{A},\boldsymbol{B}:\mathfrak{X}(M)\rightarrow \mathfrak{X}(M)$ be two generalized Nijenhuis operators of level $m$. Then for $m\geq 1$, $\bs{A}$ and $\bs{B}$ generate a  Haantjes vector space of  level $m$  if and only if the $(2m-1)$ differential  conditions 
\beq \label{eq:DC}
  \mathcal{P}^{(m)}(\hspace{-3mm}\underbrace{\bs{A},\ldots\bs{A}}_{(2m-k)-\text{times}}\hspace{-2mm},
 \underbrace{\bs{B},\ldots,\bs{B}}_{k-\text{times}}) (X,Y) = \bs{0} \ , \qquad k=1,\ldots,2m-1
\eeq
are satisfied. In addition, for $m\geq 2$,  if $\bs{A}$ and $\bs{B}$ commute, they generate a $C^{\infty}(M)$ Haantjes module  of level $m$   if and only if the same  conditions \eqref{eq:DC} are fulfilled.
\end{theorem}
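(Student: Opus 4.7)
The plan is to reduce $\tau^{(m)}_{\lambda\bs{A}+\mu\bs{B}}(X,Y)$ to a polynomial in $(\lambda,\mu)$ whose coefficients are exactly the polarizations appearing in \eqref{eq:DC}, and then invoke the linear independence of monomials to decouple these coefficients.

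The first step is to apply identity \eqref{eq:HaanA+B} together with the hypothesis $\tau^{(m)}_{\bs{A}} = \tau^{(m)}_{\bs{B}} = \bs{0}$, which yields
$$\tau^{(m)}_{\lambda\bs{A}+\mu\bs{B}}(X,Y) = \Delta_2^{(m)}(\lambda\bs{A}, \mu\bs{B})(X,Y),$$
valid $\forall \lambda, \mu \in \mathbb{R}$ when $m \geq 1$, and $\forall \lambda, \mu \in C^\infty(M)$ when $m \geq 2$, exactly as noted before the statement. Equation \eqref{eq:Delta2P} then gives the polarized expansion
$$\Delta_2^{(m)}(\lambda\bs{A}, \mu\bs{B})(X,Y) = \frac{1}{(2m)!}\sum_{k=1}^{2m-1}\binom{2m}{k}\, \mathcal{P}^{(m)}(\underbrace{\lambda\bs{A},\ldots,\lambda\bs{A}}_{(2m-k)-\text{times}},\underbrace{\mu\bs{B},\ldots,\mu\bs{B}}_{k-\text{times}})(X,Y).$$

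The second step is to extract the scalars from each entry of $\mathcal{P}^{(m)}$. In the vector space case ($\lambda, \mu \in \mathbb{R}$) this is immediate from the $\mathbb{R}$-multilinearity established in Theorem \ref{th:3.7}. In the module case ($\lambda, \mu \in C^\infty(M)$ with $m \geq 2$ and $[\bs{A},\bs{B}] = \bs{0}$), we invoke instead the $C^\infty(M)$-multilinearity of Proposition \ref{pr:linear}, whose commutativity hypothesis is satisfied since $f\bs{A}$ and $g\bs{B}$ commute whenever $\bs{A}$ and $\bs{B}$ do. In both cases one arrives at
$$\tau^{(m)}_{\lambda\bs{A}+\mu\bs{B}}(X,Y) = \frac{1}{(2m)!}\sum_{k=1}^{2m-1}\binom{2m}{k}\lambda^{2m-k}\mu^k\, \mathcal{P}^{(m)}(\underbrace{\bs{A},\ldots,\bs{A}}_{(2m-k)-\text{times}},\underbrace{\bs{B},\ldots,\bs{B}}_{k-\text{times}})(X,Y).$$

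Sufficiency is now immediate. For necessity, fix a point $p \in M$ and vector fields $X, Y$, and view the right-hand side as a polynomial in $(\lambda, \mu) \in \mathbb{R}^2$; in the module case one obtains such a polynomial at each $p$ by letting $f, g$ run over constant functions. Since the monomials $\{\lambda^{2m-k}\mu^k\}_{k=1}^{2m-1}$ are linearly independent, the vanishing of this polynomial on all of $\mathbb{R}^2$ forces each coefficient, i.e.\ each polarization value at $p$, to vanish. The main subtlety is conceptual rather than computational: the restriction to $m \geq 2$ in the module statement is needed precisely because both the extension of \eqref{eq:HaanA+B} to functional coefficients (through \eqref{tauind2}) and the $C^\infty(M)$-multilinearity of $\mathcal{P}^{(m)}$ (Proposition \ref{pr:linear}) fail for $m=1$; commutativity intervenes only to secure the latter.
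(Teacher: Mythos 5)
Your proposal is correct and follows essentially the same route as the paper: reduce $\tau^{(m)}_{\lambda\bs{A}+\mu\bs{B}}$ via Eq. \eqref{eq:HaanA+B} to the defect $\Delta_2^{(m)}(\lambda\bs{A},\mu\bs{B})$, expand it through Eq. \eqref{eq:Delta2P} and the ($\mathbb{R}$- or $C^{\infty}(M)$-)multilinearity of $\mathcal{P}^{(m)}$, and conclude by the independence of the monomials $\lambda^{2m-k}\mu^k$. You in fact make explicit two points the paper leaves implicit, namely the appeal to Proposition \ref{pr:linear} (with commutativity of $f\bs{A}$ and $g\bs{B}$) in the module case and the coefficient-extraction argument for necessity.
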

\begin{proof}
From Eqs.  \eqref{eq:HaanA+B} it follows that $\bs{A}$ and $\bs{B}$ generate  a generalized Haantjes vector space (resp.  a generalized Haantjes module) of level $m$ if and only if the defect $\Delta_2^{(m)}(\lambda\bs{A},\mu\bs{B}) $ vanishes for all $\lambda$ and $\mu \in \mathbb{R}$ (resp. $\in C^\infty(M)$). From Eq.  \eqref{eq:Delta2P} and    by the multilinearity of the polarization of the Nijenhuis  torsion of level $m$, such a defect takes the form
\begin{equation} \label{eq:Delta2}
\Delta_2^{(m)}(\lambda\bs{A},\mu\bs{B})(X,Y)=
 \frac{1}{(2m)!} \bigg( \sum_{k=1}^{2m-1}  \binom{2m}{k} \lambda^{(2m-k)}\mu^k\mathcal{P}^{(m)}(\hspace{-3mm}\underbrace{\bs{A},\ldots\bs{A}}_{(2m-k)-\text{times}}\hspace{-2mm}, \underbrace{\bs{B},\ldots,\bs{B}}_{k -\text{times}}) (X,Y) \bigg) \ ,
\end{equation}
for $m\geq 1$ and $\lambda , \mu \in \mathbb{R}$, and for $m\geq2$ and $\lambda, \mu \in  C^\infty(M)$ when $\bs{A}$ and $\bs{B}$ commute. From Eq. \eqref{eq:Delta2},  conditions \eqref{eq:DC} follow.
\end{proof}

\subsection{Higher Haantjes  brackets}
Let $M$ be a differentiable manifold and $\boldsymbol{A},\boldsymbol{B}:\mathfrak{X}(M)\rightarrow \mathfrak{X}(M)$ two operators belonging to $\mathcal{T}_1^1(M)$.

\begin{definition}[\cite{TT2022CMP}] \label{GHBT}
 The
 \textit{Haantjes bracket of level $m\in\mathbb{N}\backslash\{0\}$} of $\boldsymbol{A}$ and $\boldsymbol{B}$  is the vector--valued $2$--form defined, for any  $X,Y \in \mathfrak{X}(M)$, by the relations
\bea \label{GHTn}
\nn & & \!\!\!\!\!\!\!\!\mathcal{H}^{(1)}_{\boldsymbol{A,B}} (X,Y) := \llbracket\bs{A}, \bs{B}\rrbracket(X,Y) \\
\nn \text{and}\\
\nn & &\!\!\!\!\!\!\!\!  \mathcal{H}^{(m)}_{\boldsymbol{A,B}}(X,Y):=\Big(\bs{AB}+ \bs{BA}\Big)\mathcal{H}_{\boldsymbol{A,B}}^{(m-1)}(X,Y) +\mathcal{H}_{\boldsymbol{A,B}}^{(m-1)}(\bs{A}X,\bs{B}Y)+\mathcal{H}_{\boldsymbol{A,B}}^{(m-1)}(\bs{B}X,\bs{A}Y)
 \\
\nn &&\!\!\!\!\!\!\!\! -\bs{A}\Big(\mathcal{H}_{\boldsymbol{A,B}}^{(m-1)}(X,\bs{B}Y) +\mathcal{H}_{\boldsymbol{A,B}}^{(m-1)}(\bs{B}X,Y)\big)-\bs{B}\big(\mathcal{H}_{\boldsymbol{A,B}}^{(m-1)}(X,\bs{A}Y)+
\mathcal{H}^{(m-1)}_{\boldsymbol{A,B}}(\bs{A}X,Y)\Big), \quad m\geq 2 \ .
\eea
\beq
\eeq
\end{definition}
Each of these brackets   is $\mathbb{R}$-homogeneus of degree $m$ in each of its arguments  $\bs{A}$ and $\bs{B}$, as we shall state in Proposition \ref{pr:3.14}; morever, it is symmetric in the interchange of $\bs{A}$ and $\bs{B}$.

\begin{proposition}[\cite{TT2022CMP}] \label{pr:3.14}

Let $M$ be a differentiable manifold and $\boldsymbol{A},\boldsymbol{B}:\mathfrak{X}(M)\rightarrow \mathfrak{X}(M)$  two commuting operators. For any $f,g,h,k \in C^\infty(M)$ , $X,Y \in \mathfrak{X}(M)$ and for each integer $m\geq 2$, we have
\begin{equation} \label{eq:34}
\mathcal{H}^{(m)}_{f \boldsymbol{I}+g\boldsymbol{A},h\boldsymbol{I}+k\boldsymbol{B}} (X,Y)=g^mk^m\mathcal{H}^{(m)}_{\boldsymbol{A},\boldsymbol{B}} (X,Y)  \ .
\end{equation}
\end{proposition}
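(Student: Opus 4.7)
The plan is to prove the identity by induction on $m$, using the recursive definition \eqref{GHTn} of the Haantjes bracket together with the fact that $\mathcal{H}^{(m-1)}_{\bs{A},\bs{B}}(X,Y)$ is a vector-valued $2$-form, hence $C^\infty(M)$-bilinear in $(X,Y)$.

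For the inductive step ($m \geq 3$), I set $\tilde{\bs{A}} := f\bs{I}+g\bs{A}$ and $\tilde{\bs{B}} := h\bs{I}+k\bs{B}$---which still commute pairwise---and apply the recursion \eqref{GHTn} to $\mathcal{H}^{(m)}_{\tilde{\bs{A}},\tilde{\bs{B}}}$. The inductive hypothesis supplies $\mathcal{H}^{(m-1)}_{\tilde{\bs{A}},\tilde{\bs{B}}} = g^{m-1}k^{m-1}\mathcal{H}^{(m-1)}_{\bs{A},\bs{B}}$, which I substitute into every slot of \eqref{GHTn}. Using $C^\infty(M)$-bilinearity to distribute $\tilde{\bs{A}}X = fX + g\bs{A}X$ and $\tilde{\bs{B}}Y = hY + k\bs{B}Y$ out of the argument slots, and using commutativity to expand $\tilde{\bs{A}}\tilde{\bs{B}}+\tilde{\bs{B}}\tilde{\bs{A}} = 2fh\bs{I} + 2fk\bs{B} + 2gh\bs{A} + gk(\bs{AB}+\bs{BA})$, I obtain a polynomial in $(f,g,h,k)$ with tensor-valued coefficients. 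The pure $g^m k^m$ contribution reassembles $g^m k^m \mathcal{H}^{(m)}_{\bs{A},\bs{B}}$ via the very recursion \eqref{GHTn}, while each mixed monomial cancels term-by-term across the contributions of \eqref{GHTn}. For example, writing $H := \mathcal{H}^{(m-1)}_{\bs{A},\bs{B}}$ and tracking the coefficient of $fh\,g^{m-1}k^{m-1}$, the prefactor contributes $2H(X,Y)$, the two ``permutation'' terms each contribute $+H(X,Y)$, and the two subtraction terms each contribute $-2H(X,Y)$, summing to $(+2)+(+1)+(+1)+(-2)+(-2) = 0$; the mixed monomials $fk\,g^{m-1}k^{m-1}$ and $gh\,g^{m-1}k^{m-1}$ cancel by entirely analogous bookkeeping.

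The base case $m = 2$ is the genuine obstacle, because the inductive hypothesis fails at level $m-1 = 1$: the Fr\"olicher--Nijenhuis bracket is only $\mathbb{R}$-bilinear in its operator arguments, and $\llbracket\tilde{\bs{A}},\tilde{\bs{B}}\rrbracket$ differs from $gk\llbracket\bs{A},\bs{B}\rrbracket$ by ``anomaly'' terms involving the differentials $df,dg,dh,dk$ as dictated by \eqref{eq:FNf}. Here I would expand $\mathcal{H}^{(2)}_{\tilde{\bs{A}},\tilde{\bs{B}}}$ directly from Definition \ref{GHBT} and verify that these anomaly terms are neutralised by the level-$2$ correction. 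The key mechanism is that all commutators of the form $[\bs{A}_i,\bs{\sf{T}}(\bs{A}_j,\bs{A}_k) - \bs{\sf{T}}^T(\bs{A}_j,\bs{A}_k)]$ vanish whenever the $\bs{A}_i$'s commute pairwise---exactly the same mechanism exploited in the proof of Proposition \ref{pr:linear}---and since $\bs{I}$ commutes with every operator, the relevant commutators disappear. Once this base case is established, the inductive argument outlined above delivers the result for all $m \geq 2$.
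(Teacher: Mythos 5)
The paper does not actually prove this proposition: it is imported from \cite{TT2022CMP} as a quoted result, so there is no in-text argument to measure yours against. Judged on its own terms, your overall strategy is sound and your inductive step is correct. Writing $\tilde{\bs{A}}:=f\bs{I}+g\bs{A}$, $\tilde{\bs{B}}:=h\bs{I}+k\bs{B}$ and $H:=\mathcal{H}^{(m-1)}_{\bs{A},\bs{B}}$, the inductive hypothesis $\mathcal{H}^{(m-1)}_{\tilde{\bs{A}},\tilde{\bs{B}}}=g^{m-1}k^{m-1}H$ plus the $C^{\infty}(M)$-bilinearity of $H$ in its vector arguments reduces the recursion \eqref{GHTn} to a sum over the monomials $fh$, $fk$, $gh$, $gk$ (times $g^{m-1}k^{m-1}$); I confirm your bookkeeping that the first three coefficients cancel with the pattern $2+1+1-2-2=0$ and that the $gk$ coefficient reassembles the right-hand side of \eqref{GHTn} for $\bs{A},\bs{B}$, yielding $g^mk^m\mathcal{H}^{(m)}_{\bs{A},\bs{B}}$. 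Notably, this step uses no commutativity at all.

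The gap is the base case $m=2$, which you correctly identify as the crux but leave as a declaration of intent (``I would expand \dots and verify''). All of the content of the proposition sits there, since $\llbracket\tilde{\bs{A}},\tilde{\bs{B}}\rrbracket$ differs from $gk\llbracket\bs{A},\bs{B}\rrbracket$ by genuine anomaly terms, e.g.\ $\llbracket f\bs{I},\bs{B}\rrbracket(X,Y)=((\bs{B}X)f)\,Y-((\bs{B}Y)f)\,X-(Xf)\,\bs{B}Y+(Yf)\,\bs{B}X$, and one must check that the level-$2$ recursion annihilates each of them. Carrying this out for the representative case $\tilde{\bs{A}}=f\bs{I}+\bs{A}$, $\tilde{\bs{B}}=\bs{B}$, the terms that survive the purely tensorial cancellations are all of the form $(\text{scalar})\cdot[\bs{A},\bs{B}]Z$ or $(\text{scalar})\cdot\bs{B}[\bs{A},\bs{B}]Z$: so it is the commutativity of $\bs{A}$ with $\bs{B}$ itself that does the work, not the fact that $\bs{I}$ commutes with everything, and your appeal to the vanishing of $[\bs{A}_i,\bs{\sf{T}}(\bs{A}_j,\bs{A}_k)-\bs{\sf{T}}^T(\bs{A}_j,\bs{A}_k)]$ from Proposition \ref{pr:linear} is the right family of identities but is asserted rather than matched term-by-term to the expansion of $\mathcal{H}^{(2)}_{\tilde{\bs{A}},\tilde{\bs{B}}}$. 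You must also treat the $\rd g$ and $\rd k$ anomalies arising from $\llbracket g\bs{A},k\bs{B}\rrbracket$, which you mention but do not address. Until the $m=2$ computation is actually performed, the proof is incomplete precisely where the statement is nontrivial.
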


\vspace{3mm}

\noi We shall analyze now the case $m=2$ to  offer a new interpretation of our results in \cite{TT2022CMP} in terms of the theory of polarization forms.
\noi Notice that the notation $\mathcal{H}( {\boldsymbol{A,B}})(X,Y):= \mathcal{H}^{(2)}_{\boldsymbol{A,B}} (X,Y)$
will be used. Explicitly, we get

\begin{eqnarray} \nn
 &&\mathcal{H}({\boldsymbol{A,B}})(X,Y)=\Big(\bs{AB}+ \bs{BA}\Big) \llbracket\bs{A}, \bs{B}\rrbracket(X,Y)  +
 \llbracket\bs{A}, \bs{B}\rrbracket(\bs{A}X,\bs{B}Y)+ \llbracket\bs{A}, \bs{B}\rrbracket(\bs{B}X,\bs{A}Y)
 \\
\nn &&-\bs{A}\Big( \llbracket\bs{A}, \bs{B}\rrbracket (X,\bs{B}Y) + \llbracket\bs{A}, \bs{B}\rrbracket (\bs{B}X,Y)\big)-\bs{B}\big( \llbracket\bs{A}, \bs{B}\rrbracket(X,\bs{A}Y)+
 \llbracket\bs{A}, \bs{B}\rrbracket(X,Y) (\bs{A}X,Y)\Big).
\end{eqnarray}

\noi The following result clarifies the geometric meaning of the Haantjes  bracket of level 2. 
\begin{lemma}  [\cite{TT2022CMP}] \label{theo:HBT} 
Let $M$ be a  differentiable manifold  and $\bs{A}$, $\bs{B}:\mathfrak{X}(M)\rightarrow \mathfrak{X}(M)$  two commuting $(1,1)$ tensors which can be simultaneously diagonalized in a local chart of $M$.
Then for any $X,Y \in \mathfrak{X}(M)$, the  Haantjes  bracket $\mathcal{H}_{\boldsymbol{A,B}} (X,Y)$ vanishes.
\end{lemma}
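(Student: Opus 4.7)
The strategy is a direct coordinate computation in the simultaneously diagonal chart, after reducing to basis vector fields via tensoriality. The Fr\"olicher--Nijenhuis bracket $\llbracket\bs{A},\bs{B}\rrbracket$ is, by its polarization origin in Eq.~\eqref{FNB}, a $(1,2)$-tensor, i.e.\ it is $C^{\infty}(M)$-bilinear in $(X,Y)$. Since $\mathcal{H}^{(2)}_{\bs{A},\bs{B}}$ is constructed from $\llbracket\bs{A},\bs{B}\rrbracket$ only by applying the operators $\bs{A},\bs{B}$ and reshuffling the arguments, it inherits the same bilinearity in $(X,Y)$. Consequently, it suffices to verify the vanishing on the coordinate basis fields $\partial_j,\partial_k$ in the chart where $\bs{A}\partial_i=a_i(x)\partial_i$ and $\bs{B}\partial_i=b_i(x)\partial_i$.

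The first step is to compute $\llbracket\bs{A},\bs{B}\rrbracket(\partial_j,\partial_k)$. Because $[\partial_j,\partial_k]=0$ annihilates the first summand of Eq.~\eqref{eq:binNtors}, only Lie brackets of the form $[f\partial_j,g\partial_k]$ survive, and these manifestly lie in the span $\langle\partial_j,\partial_k\rangle$. Collecting the contributions, I would obtain
$\llbracket\bs{A},\bs{B}\rrbracket(\partial_j,\partial_k) = \alpha_{jk}\,\partial_j + \beta_{jk}\,\partial_k$,
for explicit functions $\alpha_{jk},\beta_{jk}$ built from eigenvalue differences and first derivatives of $a_i,b_i$.

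The second step is to substitute this expression into the definition of $\mathcal{H}^{(2)}_{\bs{A},\bs{B}}(\partial_j,\partial_k)$. Since $\partial_j,\partial_k$ are common eigenvectors of $\bs{A}$ and $\bs{B}$, the $C^{\infty}(M)$-bilinearity of $\llbracket\bs{A},\bs{B}\rrbracket$ in its slots gives, for instance, $\llbracket\bs{A},\bs{B}\rrbracket(\bs{A}\partial_j,\bs{B}\partial_k)=a_j b_k\,\llbracket\bs{A},\bs{B}\rrbracket(\partial_j,\partial_k)$ and analogous identities in the other slots; moreover applying $\bs{A}$ or $\bs{B}$ to the resulting vector $V:=\alpha_{jk}\partial_j+\beta_{jk}\partial_k$ merely rescales its $\partial_j$- and $\partial_k$-components by the corresponding eigenvalues. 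Each of the five summands of $\mathcal{H}^{(2)}_{\bs{A},\bs{B}}$ therefore collapses to a scalar combination of $\alpha_{jk}\partial_j$ and $\beta_{jk}\partial_k$.

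The third step is to verify the cancellation. Collecting the coefficient of $\partial_j$ produces $\alpha_{jk}\bigl[2a_jb_j+a_jb_k+a_kb_j-(b_j+b_k)a_j-(a_j+a_k)b_j\bigr]$, and a symmetric expression with $\alpha\leftrightarrow\beta$ and $j\leftrightarrow k$ arises for $\partial_k$; both bracketed polynomials vanish identically by direct expansion, so $\mathcal{H}^{(2)}_{\bs{A},\bs{B}}(\partial_j,\partial_k)=\bs{0}$. The main obstacle is purely bookkeeping, namely keeping track of the terms produced by the five-summand definition~\eqref{GHTn} and reducing them via the eigenvector identities; the conceptual content is that the common eigenvector structure reduces the whole recursion to a small polynomial identity in the four scalars $a_j,a_k,b_j,b_k$.
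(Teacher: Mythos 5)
Your proof is correct; note that the paper states this lemma without proof (it is imported from \cite{TT2022CMP}), and your direct verification in the simultaneously diagonalizing chart is the natural argument. Both ingredients you invoke check out: $\llbracket\bs{A},\bs{B}\rrbracket$ is genuinely a $(1,2)$-tensor (cf.\ the local expression \eqref{eq:FNlocal}), so reduction to $\partial_j,\partial_k$ is legitimate and the bracket lands in $\mathrm{span}\{\partial_j,\partial_k\}$; and the scalar identity $2a_jb_j+a_jb_k+a_kb_j-(b_j+b_k)a_j-(a_j+a_k)b_j=0$, together with its $j\leftrightarrow k$ mirror, does make all the summands of \eqref{GHTn} cancel.
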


 In order to prove Theorem 20 and Corollary 21 of Ref. \cite{TT2022CMP}, we  introduced there the auxiliary Haantjes brackets    $\mathcal{H}_1 (\boldsymbol{A,B})$,  $\mathcal{H}_2 (\boldsymbol{A,B})$ defined by 
\begin{equation}\label{eq:H12}
\begin{split}
&\mathcal{H}_1(\boldsymbol{A,B})(X,Y):=\bs{B}^2\tau_{\bs{A}}(X,Y)+\tau_{\bs{A}}(\bs{B}X,\bs{B}Y) -\bs{B}\big( \tau_{\bs{A}}(\bs{B}X,Y)+\tau_{\bs{A}}(X,\bs{B}Y)\big) \\
&\qquad\qquad+\bs{A}^2\tau_{\bs{B}}(X,Y)+\tau_{\bs{B}}(\bs{A}X,\bs{A}Y) -\bs{A}\big( \tau_{\bs{B}}(\bs{A}X,Y)+\tau_{\bs{B}}(X,\bs{A}Y)\big )\ ,\\
&  \mathcal{H}_2(\boldsymbol{A,B}) (X,Y):
=(\bs{A}\bs{B}+\bs{B}\bs{A})\tau_{\bs{A}}(X,Y)+
\tau_{\bs{A}}(\bs{A}X,\bs{BY}) +\tau_{\bs{A}}(\bs{B}X,\bs{A}Y) \\ 
&\qquad\qquad-\bs{A}\big(  \tau_{\bs{A}}(\bs{B}X,Y)+\tau_{\bs{A}}(X,\bs{B}Y)\big)
-\bs{B}\big(  \tau_{\bs{A}}(\bs{A}X,Y)+\tau_{\bs{A}}(X,\bs{A}Y\big)\big)\\
&+ \bs{A}^2 \llbracket\bs{A},\bs{B}\rrbracket(X,Y)+
\llbracket\bs{A},\bs{B}\rrbracket(\bs{A}X,\bs{A}Y)
-\bs{A}\big( \llbracket\bs{A},\bs{B}\rrbracket(\bs{A}X,Y)+\llbracket\bs{A},\bs{B}\rrbracket(X,\bs{A}Y)\big) 
\ .
 \end{split}
 \end{equation}

We are now able to give an algebraic interpretation of the brackets $\mathcal{H} (\boldsymbol{A,B})$, $\mathcal{H}_1 (\boldsymbol{A,B})$ and   $\mathcal{H}_2 (\boldsymbol{A,B})$ in terms of the polarization multilinear form of level $2$.
\begin{proposition} The identities 
$$
\mathcal{P}^{(2)} (\bs{A},\bs{A},\bs{B},\bs{B})(X,Y)=4\, \big(\mathcal{H}(\bs{A},\bs{B} )+\mathcal{H}_1(\bs{A},\bs{B})\big)(X,Y)
$$
$$
\mathcal{P}^{(2)} (\bs{A},\bs{A},\bs{A},\bs{B})(X,Y)=6\, \mathcal{H}_2(\bs{A},\bs{B} )(X,Y)
$$
$$
\mathcal{P}^{(2)} (\bs{A},\bs{B},\bs{B},\bs{B})(X,Y)=6\, \mathcal{H}_2(\bs{B},\bs{A} )(X,Y) \ ,
$$
hold.
\end{proposition}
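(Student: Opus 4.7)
The plan is to use the parameter representation of the polarization recorded earlier in the section,
\begin{equation*}
\mathcal{P}^{(2)}(\bs{A}_1,\bs{A}_2,\bs{A}_3,\bs{A}_4)(X,Y) = \frac{\partial^{4}}{\partial\lambda_1\partial\lambda_2\partial\lambda_3\partial\lambda_4}\,\mathcal{H}_{\sum_{i=1}^{4}\lambda_i\bs{A}_i}(X,Y)\ .
\end{equation*}
Since $\mathcal{H}_{\lambda\bs{A}+\mu\bs{B}}(X,Y)$ is bihomogeneous of total degree four in the scalar parameters $(\lambda,\mu)$, I would write it as
\begin{equation*}
\mathcal{H}_{\lambda\bs{A}+\mu\bs{B}}(X,Y)=\sum_{k=0}^{4}\lambda^{4-k}\mu^{k}\,C_k(X,Y),
\end{equation*}
with $C_0=\mathcal{H}_{\bs{A}}$ and $C_4=\mathcal{H}_{\bs{B}}$. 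Specialising the four arguments to the three patterns appearing in the statement, the derivative formula extracts a single ``mixed'' coefficient in each case: $(\bs{A},\bs{A},\bs{B},\bs{B})$ yields $2!\cdot 2!\,C_2=4\,C_2$, $(\bs{A},\bs{A},\bs{A},\bs{B})$ yields $3!\cdot 1!\,C_1=6\,C_1$, and $(\bs{A},\bs{B},\bs{B},\bs{B})$ yields $1!\cdot 3!\,C_3=6\,C_3$. The three claimed identities are therefore equivalent to the three matchings $C_2=\mathcal{H}(\bs{A},\bs{B})+\mathcal{H}_1(\bs{A},\bs{B})$, $C_1=\mathcal{H}_2(\bs{A},\bs{B})$, $C_3=\mathcal{H}_2(\bs{B},\bs{A})$.

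To compute the $C_k$, I would first apply the Fr\"olicher--Nijenhuis polarization~\eqref{FNB} together with the homogeneity relation~\eqref{tauind} (the defect term vanishes for constant scalars) to record $\tau_{\lambda\bs{A}+\mu\bs{B}}=\lambda^{2}\tau_{\bs{A}}+\mu^{2}\tau_{\bs{B}}+\lambda\mu\,\llbracket\bs{A},\bs{B}\rrbracket$, together with $(\lambda\bs{A}+\mu\bs{B})^{2}=\lambda^{2}\bs{A}^{2}+\lambda\mu(\bs{A}\bs{B}+\bs{B}\bs{A})+\mu^{2}\bs{B}^{2}$. Substituting these into Definition~\ref{df:mtorsion} with $m=2$ applied to $\bs{C}=\lambda\bs{A}+\mu\bs{B}$, and exploiting the $\mathbb{R}$-bilinearity of $\tau_{\bs{A}}$, $\tau_{\bs{B}}$ and $\llbracket\bs{A},\bs{B}\rrbracket$ in their vector-field slots, each of the four blocks $\bs{C}^{2}\tau_{\bs{C}}$, $\tau_{\bs{C}}(\bs{C}X,\bs{C}Y)$, $\bs{C}\tau_{\bs{C}}(X,\bs{C}Y)$ and $\bs{C}\tau_{\bs{C}}(\bs{C}X,Y)$ of $\mathcal{H}_{\bs{C}}(X,Y)$ unfolds into a polynomial in $(\lambda,\mu)$ whose coefficients are vector-valued $2$-forms built from $\bs{A}$, $\bs{B}$, their products, and the torsions $\tau_{\bs{A}}$, $\tau_{\bs{B}}$, $\llbracket\bs{A},\bs{B}\rrbracket$.

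The identification of $C_1$, $C_2$, $C_3$ with the stated brackets is then a matching exercise against Definition~\ref{GHBT} for $m=2$ and the formulae~\eqref{eq:H12}. The $\lambda^{3}\mu$-piece collects exactly those monomials that reassemble $\mathcal{H}_2(\bs{A},\bs{B})$ as written in~\eqref{eq:H12}; by the $\bs{A}\leftrightarrow\bs{B}$ symmetry of the construction, the $\lambda\mu^{3}$-piece yields $\mathcal{H}_2(\bs{B},\bs{A})$. The $\lambda^{2}\mu^{2}$-piece splits naturally into two disjoint groups: a first one built from $\bs{A}^{2}\tau_{\bs{B}}$, $\bs{B}^{2}\tau_{\bs{A}}$, the shifted torsions $\tau_{\bs{A}}(\bs{B}\cdot,\bs{B}\cdot)$, $\tau_{\bs{B}}(\bs{A}\cdot,\bs{A}\cdot)$ and their $\bs{A}$- and $\bs{B}$-contractions, which reassembles $\mathcal{H}_1(\bs{A},\bs{B})$; and a second one containing $(\bs{A}\bs{B}+\bs{B}\bs{A})\,\llbracket\bs{A},\bs{B}\rrbracket$ together with the $\llbracket\bs{A},\bs{B}\rrbracket$-shifts, which matches $\mathcal{H}^{(2)}_{\bs{A},\bs{B}}=\mathcal{H}(\bs{A},\bs{B})$ from Definition~\ref{GHBT}.

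The main obstacle is purely bookkeeping: a full expansion of $\mathcal{H}_{\bs{C}}$ produces well over a dozen monomials that must be collected without double-counting. A useful consistency check, which I would perform before finalising the three pairings, is to sum the three claimed identities with the binomial weights $\binom{4}{k}$ appearing in~\eqref{eq:Delta2P} and verify that the result reproduces the total defect $\Delta_2^{(2)}(\bs{A},\bs{B})=\tau_{\bs{A}+\bs{B}}^{(2)}-\tau_{\bs{A}}^{(2)}-\tau_{\bs{B}}^{(2)}$, a decomposition already implicit in~\cite{TT2022CMP}.
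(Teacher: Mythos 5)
Your proposal is correct, but it reaches the result by a different route than the paper. The paper's proof is a one-line specialization: it takes the already-established closed formula \eqref{eq:Pol2} for $\mathcal{P}^{(2)}(\bs{A}_1,\bs{A}_2,\bs{A}_3,\bs{A}_4)$ in terms of Fr\"olicher--Nijenhuis brackets, sets $\bs{A}_1=\bs{A}_3=\bs{A}$, $\bs{A}_2=\bs{A}_4=\bs{B}$ (and the analogous patterns for the other two identities), and matches the collapsed sums --- using $\llbracket\bs{A},\bs{A}\rrbracket=2\tau_{\bs{A}}$ --- against Definition \ref{GHBT} and Eq.~\eqref{eq:H12}. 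You instead bypass \eqref{eq:Pol2} altogether and extract the bihomogeneous coefficients $C_k$ of $\mathcal{H}_{\lambda\bs{A}+\mu\bs{B}}$ via the derivative representation of the polarization; your combinatorial factors ($2!\,2!=4$ and $3!\,1!=6$) are right, your expansion $\tau_{\lambda\bs{A}+\mu\bs{B}}=\lambda^{2}\tau_{\bs{A}}+\mu^{2}\tau_{\bs{B}}+\lambda\mu\llbracket\bs{A},\bs{B}\rrbracket$ for constant scalars follows correctly from \eqref{FNB} and \eqref{tauind}, and your consistency check $C_1+C_2+C_3=\Delta_2^{(2)}(\bs{A},\bs{B})$ agrees with \eqref{eq:Delta2P}. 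What the paper's route buys is economy: the heavy bookkeeping was already paid for in the proof of Theorem \ref{th:3.7}, so the proposition reduces to a substitution. What your route buys is self-containedness and a built-in sanity check, at the cost of re-expanding the recursive definition of $\tau^{(2)}_{\bs{C}}$ for the pencil $\bs{C}=\lambda\bs{A}+\mu\bs{B}$ --- a computation that is essentially the one the paper performs once and stores in \eqref{eq:Pol2}. The final matching of $C_1,C_2,C_3$ against $\mathcal{H}_2(\bs{A},\bs{B})$, $\mathcal{H}(\bs{A},\bs{B})+\mathcal{H}_1(\bs{A},\bs{B})$ and $\mathcal{H}_2(\bs{B},\bs{A})$ is the same terminal step in both arguments, and you leave it at the same level of detail as the paper does.
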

\begin{proof}
These identities are derived directly by Eq. \eqref{eq:Pol2}, setting $\bs{A}_1=\bs{A}, \bs{A}_2=\bs{B},\bs{A}_3=\bs{A},\bs{A}_4=\bs{B}$.
\end{proof}
We point out that in the light of the latter result, Corollary 21 of \cite{TT2022CMP} turns out to be a special case of our Theorem \ref{th:main}.

\end{document}